\documentclass{article}
\usepackage[margin=1in]{geometry}
\usepackage{graphicx} 
\usepackage{authblk}
\usepackage{amsmath, amsfonts, amssymb, amsthm}
\usepackage{natbib}
\usepackage{mathtools}
\usepackage{enumitem}
\usepackage{xcolor}
\usepackage{url}

\mathtoolsset{showonlyrefs}
\theoremstyle{thmstyleone}

\newcommand{\indep}{\mbox{$\perp\!\!\!\perp$}} 
\newcommand{\E}{\mathsf{E}}
\newtheorem{theorem}{Theorem}

{
  \theoremstyle{definition}
  
}

\renewcommand{\E}{\mathsf{E}}
\renewcommand{\P}{\mathsf{P}}
\newcommand{\F}{\mathsf{F}}
\newcommand{\D}{\mathsf{D}}
\newcommand{\R}{\mathsf{R}}
\newcommand{\1}{\mathsf{1}}

\DeclareMathOperator*{\argmin}{arg\,min}

\title{The Counterfactual Combine: A Causal Framework for Player Evaluation}
\author[1]{Herbert P. Susmann}
\author[1]{Antonio D'Alessandro}

\affil[1]{Division of Biostatistics, Department of Population
  Health, NYU Grossman School of Medicine, USA}

\newif\ifanonymous
\anonymoustrue 

\date{}

\begin{document}

\maketitle

\abstract{
Evaluating sports players based on their performance shares core challenges with evaluating healthcare providers based on patient outcomes. Drawing on recent advances in healthcare provider profiling, we cast sports player evaluation within a rigorous causal inference framework and define a flexible class of causal player evaluation estimands. Using stochastic interventions, we compare player success rates on repeated tasks (such as field goal attempts or plate appearance) to counterfactual success rates had those same attempts been randomly reassigned to players according to prespecified reference distributions. This setup encompasses direct and indirect standardization parameters familiar from healthcare provider profiling, and we additionally propose a ``performance above random replacement'' estimand designed for interpretability in sports settings.  We develop doubly robust estimators for these evaluation metrics based on modern semiparametric statistical methods, with a focus on Targeted Minimum Loss-based Estimation, and incorporate machine learning methods to capture complex relationships driving player performance. We illustrate our framework in detailed case studies of field goal kickers in the National Football League and batters in Major League Baseball, highlighting how different causal estimands yield distinct interpretations and insights about player performance.
}

\paragraph{Keywords: } causal inference, targeted learning, semiparametric efficiency theory

\section{Introduction}
In many sports, individual players repeatedly attempt to achieve a specific objective, such as converting a field goal in American football or reaching base in a plate appearance in baseball. Each attempt may be carried out under varying conditions affecting its difficulty, such as physical conditions, opposing player skill, and game situation. Naively evaluating players by their empirical success rate over an observed series of attempts may be biased due to selection effects: some players may face more difficult attempts than others due to systematic or random factors. Statistical methods are therefore needed to account for variation in attempt difficulty when evaluating player performance. 

The statistical aspects of sports player evaluation are closely related to problems in epidemiology and health economics. In particular, there is a close relationship with the field of healthcare provider profiling, which aims to evaluate healthcare providers (e.g., hospitals) while adjusting for differing patient case-mix \citep{iezzoni2003}. In our framework, healthcare providers play the same role as players: they are confronted with multiple patients (attempts) with varying characteristics (disease severity, comorbidities), and some providers may systematically face more difficult patient populations, necessitating risk adjustment and standardization. As such, the extensive literature on risk adjustment for healthcare provider profiling \citep{christiansen1997provider, normand1997profiling, delong1997profiling} can be adapted and applied to sports player evaluation. 

In particular, we draw on recent work framing risk adjustment within a causal inference framework \citep{daignault2017doublerobust, daignault2019mediation, susmann2025providerevaluation, mcclean2025comparingcausalparameterstreatments} to construct player evaluation metrics. We specify a causal model connecting attempt-level covariates, player identity, and attempt outcomes. We then define a general family of evaluation metrics as the average counterfactual success rate under a stochastic intervention that randomly reassigns attempts to players according to a prespecified distribution. This family encompasses interpretable metrics that compare a player's observed performance on a set of attempts to the estimated average performance had those attempts been performed by a reference population of players. The specific choice of intervention distribution determines the properties and interpretation of the metric. For example, reassigning attempts uniformly at random to any player in the league allows construction of a metric comparing a player's observed performance to a ``random replacement'' player. This type of reasoning is familiar in sports analytics through ``wins above replacement''-type measures; our approach formalizes this idea within a rigorous causal inference framework. The flexibility of this framework means metrics can be derived for different stakeholders; for instance, a team manager may prioritize metrics that reflect performance under their team's particular pattern of attempts, whereas a fan or journalist may prefer a league-wide comparison. 

Once a causal evaluation metric within our general family is chosen by specifying a particular stochastic intervention, the next step is to estimate the metric from observational data. Formally, this requires connecting the causal parameter to the observed data through identification assumptions, typically involving common support (positivity) and conditional exchangeability (no unmeasured confounders). Crucially, the strength of these assumptions depends on the chosen intervention. Some interventions, for example, necessitate a strong common support assumption requiring, in brief, that every type of attempt could in principle be taken by every player. Such an assumption may be violated if some players never encounter certain situations; for instance, a relief pitcher who only appears in late-inning, high-leverage situations will never face the same mix of batters and game states as an everyday starter. Alternative estimands may operate under weaker assumptions. Formalizing player evaluation as a causal problem makes these assumptions transparent, and allows us to leverage extant literature on, for example, sensitivity analyses and methods for addressing assumption violations \citep{petersen2012positivity, diaz2013sensitivity}. 

After establishing causal identification, we turn to statistical estimation of the evaluation metrics. We work within a nonparametric framework, imposing minimal assumptions on the data-generating distribution. Using tools from semiparametric efficiency theory \citep{bickel1993efficient, vanderVaart98}, we derive influence functions and efficiency bounds for our estimands, and construct doubly-robust, asymptotically efficient estimators. In particular, we provide an accessible review focusing on how modern techniques such as Targeted Minimum Loss-based Estimation (TMLE; \citealt{vanderLaan2006tmle, vanderLaanRose11}) can be used to obtain asymptotically linear estimators that remain valid when flexible, data-adaptive machine learning algorithms are used for nuisance-parameter estimation. The ability to incorporate machine learning makes our approach suitable for settings in which the relationships between attempt characteristics, player identity, and outcome are complex. 

We illustrate our causal player evaluation framework through two case studies. In the first, we define and estimate three metrics for field goal attempts by American football place kickers using data from National Football League (NFL) games. Field goal attempts provide an ideal setting for illustrating our approach: attempts are well-characterized by a relatively small number of observable covariates (distance, stadium type, weather, game situation), yet naive percentage-based metrics are susceptible to selection bias because different kickers face systematically different types of attempts, driven by their skills, coaching decisions, and their teams' offensive performance. In the second case study, we consider all plate appearances in a Major League Baseball (MLB) regular season, and evaluate batters based on whether they put the ball in play. Because batter success depends on a complex combination of pitcher and batter attributes and game context, flexible machine learning methods are particularly important in this setting. 

\paragraph{Prior work}
Statistical methods for player evaluation comprise a large and evolving literature, much of which is organized by sport. For instance, \citealt{kubatko2007basketball} and \citealt{terner2021basketball} review player evaluation approaches in basketball, and \citealt{pasteur2016evaluation} reviews methods used in American football. Synthesizing across sports, a common approach that is closely related to our setting uses regression models to predict outcomes conditional on player identity, game context, and attempt characteristics. For binary outcomes, logistic regression is a natural choice; for example, it has been used extensively to analyze NFL field goal success rates \citep{berry1985rating, bilder1998placekicks, berry2004, clark2013going, pasteur2014expectation}. These models provide adjusted performance measures, but typically lack an explicit causal framing and do not formalize the identification assumptions needed for counterfactual interpretation.

Causal reasoning has also been applied in sports analytics, although much of this work focuses on decision-making rather than player evaluation. Several studies analyze the causal effect of specific strategic choices, such as fourth-down decisions in American football \citep{yam2019fourthdown} or calling a timeout immediately before a field goal attempt (``icing the kicker''; \citealt{sanchez2024icing}). Other work uses causal methods to study the effects of timeouts and game momentum \citep{assis2020timeout, gibbs2022timeout, weimer2023momentum}. In contrast, examples of causal inference applied directly to player evaluation are relatively sparse \citep{terner2021basketball}. \citealt{bacco2024plusminus} describe a causal evaluation metric for individual players based on team-level outcomes, contributing to a literature that seeks to disentangle the effect of individual players on team outcomes \citep{macdonald2011plusminus, baumer2015}. Our setting differs in that we focus on metrics based on granular attempt-level outcomes, and define standardized evaluation metrics via explicit stochastic interventions. 

Machine learning has seen significant application in sports analytics \citep{koseler2017machinelearningbaseball, brefeld2022machinelearningsports, fujii2025machinelearningsports}, with an emphasis on prediction. A key advantage of our estimation approach is that it accommodates flexible, data-adaptive machine learning methods for nuisance components (such as the expected success probability conditional on covariates) while still providing valid statistical guarantees, including valid uncertainty intervals for the evaluation metrics. This allows us to leverage the predictive strength of modern machine learning algorithms while retaining rigorous inferential guarantees. 

\paragraph{Contributions}
This article formalizes player evaluation in a causal inference framework, connecting it to established literatures in epidemiology and healthcare provider profiling. We define a general class of counterfactual player evaluation metrics based on stochastic interventions that reassign attempts to players, encompassing both direct and indirect standardization estimands and ``replacement-level''-style metrics. We present identification results for these estimands, and discuss the role and plausibility of assumptions such as positivity and conditional exchangeability in sports settings. We then review concepts from semiparametric efficiency and apply them to derive doubly robust, asymptotically efficient estimators leveraging flexible machine learning methods. Finally, we illustrate our framework in case studies of NFL place kickers and MLB batters. 

\section{Causal Evaluation Metrics}
We consider \textit{attempts} as the unit of analysis. An attempt is a repeated action taken by an individual player, such as kicking a field goal in American football, making a free throw in basketball, or making a plate appearance in baseball. We consider primarily cases in which the outcome of an attempt is measured as a binary variable, e.g. whether a field goal is made, a free throw is converted, or a plate appearance results in a hit. We focus on binary outcomes, although our framework can be easily extended to non-binary outcomes as well. 

\medskip

Formally, let $Z = (X, A, Y) \in \mathcal{Z}$ be a generic random variable corresponding to an attempt, where $X \in \mathcal{X}$ is a vector of covariates characterizing the attempt (e.g. distance, location, game situation, environmental conditions), $A \in \{1, \dots, m \} \equiv \mathcal{A}$ indexes the player who made the attempt, and $Y \in \{0, 1\}$ indicates if the attempt was successful. The observed data comprise $n$ independent and identically distributed draws $Z \sim \P$, denoted $(Z_1, \dots, Z_n)$. We assume that the data-generating law $\P$ lies in the nonparametric model $\mathcal{M}$ of all probability laws on the support of $Z$. 
We use the following notation to denote elements of a law $\P \in \mathcal{M}$:
\begin{align}
    \mu_{\P}(a, X) &= \E_{\P}[Y \mid A = a, X] \\
    m_{\P}(X) &= \E_{\P}[Y \mid X] \\
    \pi_{\P}(a \mid X) &= \P(A = a \mid X).
\end{align}
The probability of treatment given covariates $\pi_{\P}(a \mid X)$ is often referred to as a \textit{propensity score}. 
As shorthand, we will write $\P(a) \equiv \P(A = a)$. 

\medskip

We formalize a causal model for player performance using the Structural Causal Models framework \citep{pearl2009}. We posit fixed functions $f_X$, $f_A$, and $f_Y$ (called \textit{structural equations}) such that
\begin{align*}
    X = f_X(U_X), \quad A = f_A(X, U_A), \quad Y = f_Y(X, A, U_Y),
\end{align*}
where $U = (U_X, U_A, U_Y)$ is exogenous. This encodes a causal model in which the attempt characteristics depend on exogenous factors, the player assigned to make the attempt depends on the characteristics of the attempt and exogenous factors, and the outcome depends on the attempt characteristics, the player, and exogenous factors.

Our proposed class of evaluation metrics are based on stochastic interventions on the mechanism assigning players to attempts. Formally, let $A^*$ be a random variable taking values in $\mathcal{A}$, drawn from a pre-specified distribution $\P_{A^* \mid X}$, which is allowed to depend on $X$. We define the counterfactual outcome under the stochastic intervention setting $A$ to $A^*$ as
\begin{align}
    Y(A^*) = f_Y(X, A^*, U_Y).
\end{align}
Intuitively, $A^*$ describes how we reassign players to attempts under a hypothetical intervention, and $Y(A^*)$ is the counterfactual outcome that would have been observed if that intervention were implemented. We target a counterfactual estimand $\psi$ as a mean counterfactual outcome under this stochastic intervention:
\begin{align}
    \label{eq:causal-estimand}
    \psi = \E\left[ Y(A^*) \mid X \in \mathcal{X}', A \in \mathcal{A}'\right],
\end{align}
where $\mathcal{X}' \subset \mathcal{X}$, $\mathcal{A}' \in \mathcal{A}$, and
we assume throughout that the conditioning set has positive probability (i.e. $\P(X \in \mathcal{X}', A \in \mathcal{A}') > 0$) so that the parameter is well defined. We discuss several specific examples below based on specific choices for the stochastic intervention and conditioning sets.  
\begin{itemize}
    \item \textbf{Direct standardization.} Fix $a \in \mathcal{A}$. Set $\P_{A^* \mid X}$ such that $A^* = a$ with probability one. This amounts to a deterministic intervention always assigning attempts to player $a$. Next, set $\mathcal{A}' = \mathcal{A}$, yielding the parameter $\psi_a^{\mathsf{direct}} = \E\left[Y(a) \mid X \in \mathcal{X}' \right]$. This parameter is referred to as a \textit{direct standardization} parameter in epidemiology and healthcare provider profiling. In causal inference, this parameter is equivalent to a mean counterfactual outcome under a categorical treatment. In our setting, this is interpretable as the counterfactual success rate had all attempts been assigned to player $a$.
    \item \textbf{Indirect standardization.} Set $\P_{A^* \mid X} = \P_{A\mid X}$, where $\P_{A \mid X}$ is the conditional distribution of $A$ given $X$. This encodes a stochastic intervention in which attempts are randomly reassigned to a player who would have been likely to make the attempt, based on its characteristics. Next, set $\mathcal{A}' = \{ a \}$, such that the reference distribution is the attempts made by player $a$. The resulting \textit{indirect standardization} parameter is $\psi_a^{\mathsf{indirect}} = \E[Y(A^*) \mid A = a]$. A player evaluation metric can be formed by the contrast $\E[Y \mid A = a] - \psi_a^\mathsf{indirect}$, which compares a player's observed performance to the counterfactual mean performance had the attempts instead been randomly reassigned to players with the same attempt profile. 
    \item \textbf{Performance above random replacement.} Set $\P_{A^*}$ such that, for all $a \in \mathcal{A}$, $\P(A^* = a) = \tfrac{1}{m}$. This encodes an intervention in which every attempt is equally likely to be taken by any player, regardless of the attempt characteristics $X$. Setting $\mathcal{A}' = \{ a \}$ yields the parameter $\psi_a^{\mathsf{rand}} = \E[Y(A^*) \mid A = a]$. For intuition, in the American Football example, this intervention could be theoretically implemented by pausing the game any time that kicker $a$ was about to make a field goal attempt and replacing them with another random kicker. An evaluation metric for player $a$ can then be formed by the contrast $\E[Y \mid A = a] - \psi_a^{\mathsf{rand}}$, which compares the observed performance of player $a$ against the average performance had they been replaced by a random alternate player. This parameter is related to the Average Treatment Effect on the Treated (ATT) parameter, extended to categorical treatments.
\end{itemize}
Additional parameters can be constructed using different choices for the conditioning sets, which we discuss next. 
\begin{itemize}
    \item \textbf{League-wide.} Setting $\mathcal{X}' = \mathcal{X}$ makes no restrictions on the types of attempts considered in the counterfactual parameter.
    \item \textbf{Team-specific}. Suppose the covariates $X$ include a variable $T \in \{ \mathsf{team1}, \mathsf{team2}, \dots \}$ indexing the team for which the attempt is to be made. For example, in the NFL case study $T$ indicates the team attempting a field goal. If we condition on $T = \mathsf{team1}$ then the causal evaluation metric will average over the population of attempts made by that team. This may be of interest to team managers who wish to compare how players would have performed if confronted by the types of attempts representative of their team.
    \item \textbf{Player-specific.} Setting $\mathcal{A}' = \{ a \}$ restricts the reference population to that of a single player. 
\end{itemize}

Some care is necessary in interpreting the proposed metrics, particularly when it comes to comparing players by their evaluation metrics. The key complication arises when the reference distribution is not the same for every player. For the direct standardization parameter $\psi_a^{\mathsf{direct}} = \E[Y(a)]$, the counterfactual outcomes for player $a$ are averaged over the same attempt distribution regardless of player. Players can therefore be straightforwardly ranked by their direct standardization metric. If $\psi_{a_1}^{\mathsf{direct}} > \psi_{a_2}^\mathsf{direct}$, then player $a_1$ would have performed better than player $a_2$ on average over the same distribution of attempts. On the other hand, indirect standardization and the random replacement parameters do not allow for straightforward comparison and ranking of players because the reference distributions are player specific. The fact that indirect standardization cannot be used to create league tables is well-known in healthcare provider profiling \citep{shahian2020misuse}. 

The causal evaluation estimands are defined in terms of counterfactual outcomes $Y(A^*)$. To connect these causal quantities with the observed data, we require several identification assumptions. 
\begin{enumerate}[label={A\theenumi}]
    \item \label{assumption:no-unmeasured-confounding} \textbf{No unmeasured confounders.} Assume $U_A \indep U_Y$ and either $U_X \indep U_A$ or $U_X \indep U_Y$. 
    \item \label{assumption:positivity} \textbf{Positivity.} For all $x \in \mathcal{X}'$ and all $a$ with $\P(A^* = a \mid X = x) > 0$, then $\pi_{\P}(a \mid x) > 0$. 
    \item \label{assumption:intervention} \textbf{Intervention independence.} The stochastic intervention $A^*$ is independent of $U_Y$ given $X$: $A^* \indep U_Y \mid X$. 
\end{enumerate}
Assumption~\ref{assumption:no-unmeasured-confounding} implies the conditional exchangeability condition that for $a \in \mathcal{A}$, $Y(a) \indep A \mid X$. Note that Assumption~\ref{assumption:intervention}, which requires that the intervention cannot reassign players in a way that incorporates information from the unobserved $U_Y$, is satisfied by construction in all of our examples. 
Assumptions~\ref{assumption:no-unmeasured-confounding}, \ref{assumption:positivity} and \ref{assumption:intervention} are sufficient to identify the causal estimand \eqref{eq:causal-estimand}:
\begin{align}
    \psi(\P) = \E_{\P}\left[ \sum_{a' \in \mathcal{A}} \E_{\P}\left[ Y \mid A = a', X \right] \P(A^* = a' \mid X) \mid X \in \mathcal{X}', A \in \mathcal{A}' \right].
\end{align}
A formal derivation is included in the appendix for completeness, although it follows standard arguments. 
Next, we discuss the assumptions and identification result for the running examples below.
\begin{enumerate}
    \item \textbf{Direct standardization.} The identification result simplifies to 
    \begin{align}
        \psi_a^\mathsf{direct} = \E[\E[Y \mid A = a, X] \mid X \in \mathcal{X}'] = \E[\mu_{\P}(a, X) \mid X \in \mathcal{X}'].
    \end{align}
    The positivity assumption \ref{assumption:positivity} requires that $\pi_{\P}(a \mid X) > 0$ holds $\P$-almost surely; in other words, the target player must have had a positive probability of having made all attempts in the reference population. In the American Football example, this assumption might be violated if there is a kicker who never attempted kicks in some circumstances. For example, there might be a player who, due to their team schedule, never attempts kicks in an open stadium in cold weather. This would violate the positivity assumption if the reference population of the direct standardization parameter includes open stadium cold weather attempts.
    \item \textbf{Indirect standardization.} The identification result simplifies to 
    \begin{align}
        \psi_a^\mathsf{indirect} = \E[\E[Y \mid X] \mid A = a, X \in \mathcal{X}'] = \E[m_{\P}(X) \mid A = a, X \in \mathcal{X}'].
    \end{align}
    Note the crucial difference between this result and the result for the direct standardization parameter: here, the inner expectation does not condition on $A = a$. This has important consequences for estimation. The positivity assumption collapses to requiring only that $\P(A = a \mid X \in \mathcal{X}') > 0$, which is a weaker positivity condition than is required for direct standardization. If $\mathcal{X}' = \mathcal{X}$, then the assumption requires only that $\P(A = a) > 0$, which is satisfied trivially.
    \item \textbf{Performance above random replacement.} The identification result simplifies to 
    \begin{align}
        \psi_a^\mathsf{rand} = \frac{1}{m} \sum_{a'=1}^m \E\left[\E[Y \mid A = a', X] \mid  A = a, X \in \mathcal{X}' \right] = \frac{1}{m} \sum_{a'=1}^m \E\left[\mu_{\P}(a', X) \mid  A = a, X \in \mathcal{X}' \right].
    \end{align}
    The positivity assumption requires that for any $X \in \mathcal{X}'$ such that $\pi_{\P}(a \mid X) > 0$, then $\pi_{\P}(a' \mid X) > 0$ for all $a' \in \mathcal{A}$. In words, for any type of attempt by player $a$, every other player must also have had a positive probability of making such an attempt.
\end{enumerate}

\section{Efficiency Theory}
In this section we discuss the statistical properties of the causal evaluation metrics, working within a nonparametric statistical model. The underlying theory related to the estimation of low-dimensional parameters in semiparametric models has a long history (see e.g. \citealt{bickel1993efficient, vanderVaart98, kosorok2008} for comprehensive treatments). We provide a brief overview, and refer to \citealt{kennedy2023semiparametric} for a deeper introduction with applications in causal inference. 

Our treatment unfolds similarly for each of the evaluation metrics. In general, suppose we wish to estimate a generic evaluation metric $\psi : \mathcal{M} \to \mathbb{R}$. First, we derive the efficiency bound for estimating $\psi(\P)$ in a nonparametric model, which involves deriving the \textit{efficient influence function} (EIF) of $\psi$. The EIF is fundamental because the variance of any regular estimator of $\psi(\P)$ is bounded below by the variance of the EIF. Second, we draw on knowledge of the EIF to construct a targeted estimator that achieves this bound. In this article, we focus on constructing such estimators using Targeted Minimum Loss-based Estimation (TMLE). 

The key to understanding the semiparametric properties of a parameter $\psi$ is to show that it is sufficiently smooth as a function of the underlying distribution so as to satisfy a type of functional first-order Taylor expansion, sometimes referred to as a \textit{von Mises expansion} \citep{robins2008higherorder, kennedy2023semiparametric}. 
Formally, this requires showing that there exists a function $\D_{\P} : \mathcal{Z} \to \mathbb{R}$ in $L_2^0(\P)$ and a function $R : \mathcal{M} \times \mathcal{M} \to \mathbb{R}$ such that, for any $\P$, $\F \in \mathcal{M}$, 
\begin{align}
    \psi(\F) - \psi(\P) = \E_{\P}\left\{ \D_{\F}(Z) \right\} + \R(\P, \F).
\end{align}
The function $\D_{\P}$ is called the (canonical) gradient of $\psi$ at $\P$, and $\R$ is called the second-order remainder term. In a nonparametric model, the canonical gradient is unique in $L_2^0(\P)$ and coincides with the \textit{efficient influence function} (EIF) of $\psi$ at $\P$.  
Establishing this expansion is crucial because the asymptotic variance of any regular, asymptotically linear estimator of $\psi({\P})$ is lower bounded by $\E_{\P}\left[ \D_{\P}(Z)^2 \right]$, the variance of the EIF. 

The EIFs for direct and indirect standardization parameters are given in \citealt{susmann2025providerevaluation}. In Appendix~\ref{appendix:semi-parametrics}, we formally establish the von Mises expansion for the ``random replacement'' parameter $\psi_a^{\mathsf{rand}}$, which includes the form of its EIF. Taken together, and fixing $\mathcal{X}' = \mathcal{X}$ for simplicity, the EIFs for the direct, indirect, and random replacement parameters specific to player $a \in \mathcal{A}$ are given by
\begin{align}
    \D^{\mathsf{direct}}_{\P}(Z) &= \textcolor{purple}{\frac{\1(A = a)}{\pi_{\P}(a \mid X)}}\left\{ Y - \mu_{\P}(a | X) \right\} + \mu_{\P}(a | X) - \psi_a^{\mathsf{direct}}, \\
    \D^{\mathsf{indirect}}_{\P}(Z) &= \textcolor{purple}{\frac{\pi_{\P}(a \mid X)}{\P(a)}}\left\{ Y -  m_{\P}(X) \right\} + \frac{\1(A = a)}{\P(a)} \left\{ m_{\P}(X) - \psi_a^{\mathsf{indirect}} \right\} , \\
    \D^{\mathsf{rand}}_{\P}(Z) &= \frac{1}{m} \sum_{a' = 1}^m \left[ \textcolor{purple}{\frac{\1(A = a')}{\P(a)}\frac{\pi_{\P}(a \mid X)}{\pi_{\P}(a' \mid X)}} \left\{ Y - \mu_{\P}(a', X) \right\} + \frac{1}{\P(a)}\left\{ \mu_{\P}(a', X) - \psi_a^{\mathsf{rand}} \right\} \right].
\end{align}
Each of the EIFs has the same basic structure: a weighted residual plus a centered conditional mean term. The key differences lie in the residual weights (highlighted in the above display), with direct implications for how difficult each parameter is to estimate. 
For the direct standardization parameter, the residual is weighted by the inverse propensity score. As a result, small propensity scores (i.e., players who rarely take certain types of attempts) inflate the variance of the EIF and can substantially increase the efficiency bound. In contrast, for the indirect standardization parameter, the residual is weighted by the propensity score itself rather than its inverse. Small propensity scores have the opposite effect from before: they down-weight rare player-attempt combinations, yielding a smaller efficiency bound. This helps to explain the popularity of indirect standardization parameters in healthcare provider profiling: the intervention distribution reflects the observed patterns in such a way that positivity violations do not affect the efficiency bound. Finally, for the random replacement metric, the weights involve ratios of propensity scores for the focal player and each other player. The efficiency bound is therefore driven by how much overlap there is between the focal player's attempt profile and other players.

\section{Estimation}
In this section we discuss methods for estimating causal evaluation metrics, with the goal of deriving estimators that achieve the efficiency bounds given in the previous section. Efficient estimators for direct and indirect parameters were proposed in \citep{susmann2025providerevaluation} in the context of healthcare provider profiling; as such, we focus in this section on discussing the random replacement parameter $\psi_{a}^{\mathsf{rand}}$. We consider here the high-level considerations when designing an estimator and include details in Appendix~\ref{appendix:tmle}. 

\subsection{Substitution Estimators}
To introduce the main considerations that arise in estimation, we start by discussing a substitution approach. Suppose we estimate an outcome regression model $\hat{\mu}$ using some regression method, for example logistic regression. Then, an estimate of the random replacement parameter can be formed by averaging the predictions from the regression model. Let $\mathcal{C} = \{ i : A_i = a, X_i \in \mathcal{X}' \}$. The substitution estimator is given by 
\begin{align}
    \hat{\psi}_a^{\mathsf{rand}} = \frac{1}{|\mathcal{C}|} \sum_{i \in \mathcal{C}} \frac{1}{m} \sum_{a' = 1}^m \hat{\mu}(a', X_i),
\end{align}
where $\hat{\mu}(a', X)$ is an estimate of $\mu_{\P}(a', X) = \E[Y \mid A = a', X]$. 
Substitution estimators are attractive because they are conceptually straightforward and easy to implement. In practice, $\hat{\mu}$ can be estimated using standard software (e.g. \texttt{glm} in \texttt{R}), and the marginalization can be performed easily with tools such as the \texttt{marginaleffects} package \citep{arelbundock2024marginaleffects}. When $\hat{\mu}$ is parametric, the contribution of each covariate to the success probability can typically be easily interpreted via the estimated regression coefficients.

The statistical properties of the substitution estimators depend entirely on how $\hat{\mu}$ is estimated. In particular, $\hat{\psi}_a^{\mathsf{rand}}$ is consistent for $\psi_a^{\mathsf{rand}}$ only if $\hat{\mu}$ consistently estimates the true outcome regression $\mu_{\P}$ (for example, in $L_2(\P)$). For example, a simple main-effects logistic regression may fail to capture important interactions or nonlinearities in the relationship between covariates and attempt success; in such cases, $\hat{\mu}$ is misspecified and the resulting substitution estimator will not be consistent.

More complex models could be chosen for $\hat{\mu}$, such as penalized regression models or machine learning algorithms like random forests or gradient boosting. Two challenges then arise. First, optimizing $\hat{\mu}$ for prediction optimizes a bias-variance trade-off for the full regression function $\mu_{\P}$. Plugging this estimate into the functional $\psi_{a}^{\mathsf{rand}}$ does not generally yield an efficient estimator of $\psi_{a}^{\mathsf{rand}}$. Second, in general we do not have a straightforward way to obtain valid inference (e.g. asymptotically valid confidence intervals) for the resulting plug-in estimate when $\hat{\mu}$ is fit with flexible methods. These limitations motivate estimators that are specifically constructed (\textit{targeted}) to be efficient for $\psi_{a}^{\mathsf{rand}}$ even when $\hat{\mu}$ is estimated with arbitrary data-adaptive methods.

\subsection{Targeted Minimum Loss-based Estimation}
The motivation behind targeted estimators can be seen by analyzing the properties of a plug-in estimator. Suppose we have an initial estimate $\hat{\P}$ of the data-generating distribution $\P$. The von Mises expansion reveals that the error of a plug-in estimator $\psi(\hat{\P})$ based on $\hat{\P}$ decomposes as
\begin{align}
    \psi(\hat{\P}) - \psi(\P) = \E_{\P}\left\{ \D_{\hat{\P}}(Z) \right\} + \R(\P, \hat{\P}),
\end{align}
where $\D_{\hat{\P}}$ is the EIF of $\psi$ at $\hat{\P}$ and $\R(\P, \hat{\P})$ is a second-order remainder term.
The first term on the right-hand side represents the first-order bias of the plug-in estimator, and the second term is the second-order bias of the plug-in estimator. To show consistency of the plug-in estimator, we would need to show that both terms converge to zero asymptotically, which is difficult to ensure in a general setting. 

One approach is to debias the plug-in estimate by estimating the first-order bias via the empirical mean of the estimated EIF and subtracting it from the initial plug-in estimate:
\begin{align}
    \hat{\psi}^{\mathsf{onestep}} \equiv \psi(\hat{\P}) - \frac{1}{n}\sum_{i=1}^n \D_{\hat{\P}}(Z_i).
\end{align}
If conditions can be found to control the second-order remainder term $\R(\P, \hat{\P})$, it is possible to show consistency and asymptotic normality. Such estimators are typically referred to as \textit{one-step estimators} because they are analogous to a one-step Newton-Raphson update, with the EIF playing the role of the gradient of the functional \citep{pfanzagl1982, bickel1993efficient, fisher2021}. Under appropriate rate conditions on the nuisance estimators, $\hat{\psi}^{\mathsf{onestep}}$ is asymptotically normal and efficient. A drawback of this approach is that  the estimator is not guaranteed to lie in the parameter space; in our case, for example, a one-step estimate of the random replacement parameter may fall outside of $[0, 1]$. 

TMLE takes a different approach by carefully \textit{fluctuating} the initial estimate $\hat{\P}$ of $\P$ along a low-dimensional parametric submodel in such a way that the updated estimate $\hat{\P}^*$ (approximately) solves the empirical EIF estimating equation; that is,
\begin{align}
    \frac{1}{n}\sum_{i=1}^n \D_{\hat{\P}^*}(Z_i) \approx 0.
\end{align}
A targeted estimate of $\psi_{a}^{\mathsf{rand}}$ is then formed by plugging $\hat{\P}^*$ into the parameter mapping: $\hat{\psi}_{a}^{\mathsf{rand}} \equiv \psi_a^{\mathsf{rand}}(\hat{\P}^*)$. As a substitution estimator, the targeted estimate is guaranteed by construction to lie in the parameter space. Constructing an appropriate fluctuation scheme that zeros the empirical first-order term is primarily a mathematical exercise; we give a complete description of the proposed algorithm in Appendix~\ref{appendix:tmle}. In terms of consistency, the structure of the EIF and remainder term implies that the resulting estimator will be consistent if \textit{either} the outcome regression $\mu_{\P}$ or the attempt assignment model $\pi_{\P}$ is estimated consistently, a result commonly referred to as \textit{double robustness}. 

By design, the targeted estimate eliminates the leading (first-order) term in the von Mises expansion. The remaining task is to establish conditions under which the second-order remainder term $\R(\P, \hat{\P}^*)$ is asymptotically negligible. In our setting, the form of the remainder can be shown to have a product form, involving the estimation errors of the outcome regression $\mu$ and the attempt assignment model $\pi$. As a result, the targeted estimator can be shown to be asymptotically normal and efficient as long as $\hat{\mu}$ and $\hat{\pi}$ converge at rates faster than $n^{-1/4}$. This \textit{rate double robustness} rate \citep{rotnitzky2020mixedbias} is the crucial finding that allows the use of data-adaptive nuisance estimation methods that converge at slower than parametric rates.

Our estimation strategy relies crucially on \textit{cross-fitting} \citep{schick1986semiparametric}. Cross-fitting proceeds by splitting the data into multiple disjoint folds. For each fold, nuisance estimators are fit using observations in the complementary folds (the training set), and out-of-sample predictions are produced for observations in the held-out (validation) fold. Iterating over folds yields a complete set of nuisance predictions, where each prediction is computed out of sample. Cross-fitting allows us to use arbitrary, potentially complex nuisance estimators, or ensembles of multiple estimators. Without cross-fitting, asymptotic normality results typically require restricting the complexity of the nuisance estimators. A classical way to achieve this is by assuming nuisance estimators fall in a Donsker class, verified through entropy arguments \citep{vanderVaart1996}. Cross-fitting obviates these requirements and is now a standard practice in semiparametric estimation \citep{zheng2011cvtmle, chernozhukov2018dml}.

We are now ready to formally state the statistical properties of the cross-fitted, targeted estimator of $\psi_{a}^{\mathsf{rand}}$.
\begin{theorem}[Consistency and asymptotic normality]
    \label{thm:consistency}
    Let $\| \cdot \|$ denote the $L_2(\P)$ norm, and $\hat{\psi}_a^{\mathsf{rand}}$ be the TMLE estimate of $\psi_{a}^{\mathsf{rand}}$. Assume there exists $\epsilon > 0$ such that $\pi(a' \mid X) > \epsilon$ for all $a' \in \mathcal{A}$, $\P$-almost surely. Suppose that either $\| \hat{\pi} - \pi \| = o_{\P}(1)$ or $\| \hat{\mu} - \mu \| = o_{\P}(1)$. Then 
    \begin{align}
        \hat{\psi}_{a}^{\mathsf{rand}} - \psi_{a}^{\mathsf{rand}} = o_{\P}(1) .
    \end{align}
    Suppose that both $\| \hat{\pi} - \pi \| = o_{\P}(n^{-1/4})$ and $\| \hat{\mu} - \mu \| = o_{\P}(n^{-1/4})$. Then 
    \begin{align}
        \sqrt{n}\left( \hat{\psi}_a^{\mathsf{rand}} - \psi_{a}^{\mathsf{rand}} \right) \leadsto N\left(0, \E_{\P}\left[ \D^{\mathsf{rand}}_{\P}(Z)^2 \right]\right).
    \end{align}
\end{theorem}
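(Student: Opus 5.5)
The plan is to use the von Mises expansion for $\psi_a^{\mathsf{rand}}$ from Appendix~\ref{appendix:semi-parametrics}, evaluated at the targeted, cross-fitted estimate $\hat{\P}^*$. Write $\P_n$ for the empirical measure and $\P_{n,k}$ for the empirical measure on fold $k$. The expansion gives
\begin{align}
    \hat{\psi}_a^{\mathsf{rand}} - \psi_a^{\mathsf{rand}} = \psi(\hat{\P}^*) - \psi(\P) = -\E_{\P}\{\D_{\hat{\P}^*}(Z)\} + \R(\P, \hat{\P}^*),
\end{align}
up to the sign convention of the expansion. Adding and subtracting the empirical mean of the estimated EIF yields the standard three-term decomposition
\begin{align}
    \hat{\psi}_a^{\mathsf{rand}} - \psi_a^{\mathsf{rand}} = (\P_n - \P)\D_{\P} + (\P_n - \P)(\D_{\hat{\P}^*} - \D_{\P}) - \P_n \D_{\hat{\P}^*} + \R(\P, \hat{\P}^*).
\end{align}
The TMLE fluctuation step is built so that $\P_n \D_{\hat{\P}^*} = o_{\P}(n^{-1/2})$, so the third term vanishes at the required rate. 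I would also handle the terms $\P(a)$ and $\psi_a^{\mathsf{rand}}$ that appear inside $\D^{\mathsf{rand}}$: replace $\P(a)$ by its empirical frequency, and note that the $\psi$ term centres the EIF.

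\textbf{Remainder.} Next I will compute $\R$ explicitly. Conditioning on $X$ and $A=a$ in the residual term, the weighted residual integrates to
\begin{align}
    \tfrac{1}{m\P(a)}\sum_{a'} \E\left[\pi(a\mid X)\tfrac{\hat{\pi}(a\mid X)\,\pi(a'\mid X)}{\pi(a\mid X)\,\hat{\pi}(a'\mid X)}\,(\mu - \hat{\mu})(a',X)\right].
\end{align}
Combined with the plug-in term, this leaves a sum of integrals of the form
\begin{align}
    \E\left[\left(\tfrac{\hat{\pi}(a\mid X)\pi(a'\mid X)}{\hat{\pi}(a'\mid X)} - \pi(a\mid X)\right)(\mu-\hat{\mu})(a',X)\right]/(m\P(a)),
\end{align}
plus a term from estimating $\P(a)$ that is $O_{\P}(n^{-1/2})$ and second order once multiplied by $\psi$ errors. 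The bracketed factor is a difference of propensity ratios. Under strong positivity, with $\hat{\pi}$ truncated at $\epsilon$ if necessary, Cauchy--Schwarz bounds the remainder by
\begin{align}
    C\,\|\hat{\pi}-\pi\|\,\|\hat{\mu}-\mu\|.
\end{align}
This product form is the key structural fact. It gives the $o_{\P}(n^{-1/2})$ rate when both nuisance errors are $o_{\P}(n^{-1/4})$.

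\textbf{Empirical process term.} I will use cross-fitting. Conditional on the training folds, $\D_{\hat{\P}^*}-\D_{\P}$ is a fixed function on the validation fold. Chebyshev's inequality then gives
\begin{align}
    (\P_{n,k}-\P)(\D_{\hat{\P}^*}-\D_{\P}) = O_{\P}\bigl(n^{-1/2}\|\D_{\hat{\P}^*}-\D_{\P}\|\bigr),
\end{align}
and $\|\D_{\hat{\P}^*}-\D_{\P}\|\to 0$ follows from the $L_2$ consistency of both nuisances, boundedness of $Y$, and the $\epsilon$-bounds. One subtlety is that the TMLE fluctuation parameter is fit on the full sample. Since it is finite-dimensional and converges to zero, I would absorb it with a simple Lipschitz argument over a one-dimensional parametric class, as in CV-TMLE \citep{zheng2011cvtmle}. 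The central limit theorem applied to $(\P_n-\P)\D_{\P}$ then gives asymptotic normality with variance $\E_{\P}[\D^{\mathsf{rand}}_{\P}(Z)^2]$.

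\textbf{Consistency under one correct nuisance.} Here I would not use rates. The first two terms of the decomposition are $o_{\P}(1)$ by the law of large numbers and by the conditional Chebyshev bound; this requires only that the estimated EIF stays bounded, which holds under positivity, not that it converges to the true EIF. The remainder is bounded by $C\|\hat{\pi}-\pi\|\,\|\hat{\mu}-\mu\|$ where the other factor is merely $O_{\P}(1)$, so it is $o_{\P}(1)$ when either nuisance is consistent. The delicate point is that the expansion must be taken around the limits of the misspecified nuisances, which requires those estimators to converge to some limit. I expect this step, together with verifying the exact algebraic form of the remainder (the main obstacle), to require the most care.
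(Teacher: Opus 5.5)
Your proposal is correct and follows the paper's proof sketch essentially step for step. It uses the same decomposition into the TMLE score term $-\P_n \D_{\hat{\P}^*}$, the CLT term, the cross-fitted empirical process term, and the product-form remainder from the appendix von Mises expansion bounded via Cauchy--Schwarz, and it fills in details the paper leaves implicit (the explicit remainder, truncation of $\hat{\pi}$, the $\hat{\P}(a)$ term, the fluctuation parameter, and the one-nuisance consistency argument). On the step you flag as delicate: for consistency you do not need the nuisance estimators to converge to any limit, because the expansion is exact at every $\hat{\P}^*$, and a product bound with one factor $o_{\P}(1)$ and the other merely bounded already gives $\R(\P, \hat{\P}^*) = o_{\P}(1)$.
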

A proof sketch is given in Appendix~\ref{appendix:tmle}. 
The stronger positivity condition in the theorem, requiring that the propensity scores be bounded away from zero, is a commonly used sufficient condition for showing asymptotic normality. Based on the asymptotic normality results in Theorem~\ref{thm:consistency}, Wald-type confidence intervals for the TMLE point estimates can be formed using the empirical standard deviation of the estimated EIF as a standard error estimate. 

\section{Case Studies}
We illustrate the definition and estimation of causal evaluation metrics in case studies of NFL kickers and MLB batters. In both case studies, we use the 
\ifanonymous
[anonymized] package
\else
\texttt{TargetedRisk} package (\url{https://github.com/herbps10/TargetedRisk})
\fi
for estimation. We use ensemble learning \citep{breiman1996stacking} (specifically through the \textit{Super Learning} \citealt{vanderlaan2007superlearner} framework) with the \texttt{mlr3superlearner} package for nuisance estimation. Reproduction code for the case studies is available at \ifanonymous
    [anonymized].
\else
    \url{https://github.com/herbps10/causal_combine_paper}.
\fi

\subsection{NFL Kickers}
In our first case study, we define and estimate evaluation metrics for NFL kickers based on field goal attempts. We use data from all games played during the 2013--2023 seasons collected using the \texttt{nflreadr} package \citep{ho2025nflreadr}. The raw data include 11,523 plays marked as field goal attempts performed by 110 unique kickers. Extra point attempts following a touchdown were not included. We restricted our analysis to the 46 kickers with at least 100 attempts to ensure reasonably precise player-specific estimates, yielding an analysis dataset of 9,786 attempts. The empirical field goal success rate in this dataset is 85.3\%. 

As covariates $X$, we include the kick distance (measured in yards), an indicator for an outdoor (non-enclosed) stadium, an indicator for grass vs artificial turf, temperature (Fahrenheit), wind speed (miles per hour), and an indicator for attempts occurring in the final 30 seconds of the game. For enclosed stadiums, wind speed was not recorded; we set wind speed variable to zero in these cases, reflecting that on-field wind exposure is typically minimal.

For each kicker in the analysis dataset, we estimate direct, indirect, and random replacement evaluation metrics. For the direct metric, we compare kickers based on the targeted estimates $\widehat{\psi}_{a}^\mathsf{direct}$, which under our identification assumptions can be interpreted as the mean success rate that would have been observed if every field goal attempt had been taken by the focal kicker, under the required identification assumptions. Because every kicker is evaluated on the same reference distribution of attempts, these standardized success rates can be directly compared across kickers.
For the other parameters, we calculated, for each kicker $a$, the difference between their empirical success rate and their indirectly standardized and random replacement parameters, yielding the metrics
\begin{align}
    \label{eq:indirect-metrics}
    \widehat{\Delta}^\mathsf{indirect}_a = \widehat{\E}[Y \mid A = a] - \psi_{a}^\mathsf{indirect} \quad \text{ and } \quad  \widehat{\Delta}^\mathsf{rand}_a = \widehat{\E}[Y \mid A = a] - \psi_{a}^\mathsf{rand}.
\end{align}
Recall that under causal assumptions, $\psi_a^\mathsf{indirect}$ is interpreted as the success rate that would be observed if kicker $a$'s attempts were randomly reassigned to kickers with similar attempt profiles. The random replacement parameter has a related interpretation: $\psi_a^{\mathsf{rand}}$ is the success rate had kicker $a$'s attempts been randomly reassigned uniformly at random to any kicker in the league. In both cases, the reference distribution is player-specific. As such, kickers cannot be directly compared to one another using either metric; rather, they are useful for assessing whether a kicker performs better or worse than expected with respect to a reference distribution of players. For all metrics, the outcome regression $\mu_{\P}(a, X) = \E_{\P}(Y \mid A = a, X)$ and attempt assignment model $\pi_{\P}(a \mid X) = \P(A = a \mid X)$ are estimated using a Super Learner ensemble including the \texttt{mean}, \texttt{glm}, and \texttt{lightgbm} libraries. All estimates are constructed using 10-fold cross-fitting. 

Assigning the estimates a causal interpretation depends on the validity of each parameter's identification assumptions. All three types of parameter depend on the no-unmeasured-confounding assumption. While it is impossible to verify this empirically, our covariates include the most important factor (kick distance) and a range of physical and game-situation variables. The direct standardization and random replacement metrics additionally require positivity conditions ensuring that each kicker has a positive probability of attempting each type of field goal in the reference population. While this holds in principle, in practice some kickers are very unlikely to attempt certain kicks; for example, a coach may avoid long field goals with a kicker perceived as weak from distance. Such practical positivity violations may manifest as increased uncertainty (wider intervals) for those kickers.

We begin discussing the results with the directly standardized field goal success rates. Because kickers can be directly compared against each other using this metric, we present the results in a leaderboard format, sorting kickers by the point estimate of their standardized success rate (Figure~\ref{fig:nfl-direct-standardization}). The results rank Evan McPherson, Tyler Bass, and Justin Tucker as the top three kickers in terms of their directly standardized success rates. It's somewhat surprising, at first glance, to see Phil Dawson as the player with the lowest standardized success rate in terms of the point estimate. For fans of Dawson, we note two points: the analysis period includes only the latter part of his career (excluding his tenure from 1999-2012 at the Cleveland Browns), and his performance is statistically indistinguishable from many other kickers once uncertainty is taken into account. 

Interestingly, several kickers have 95\% confidence intervals that do not overlap their empirical success rate. For example, Tyler Bass has a higher directly standardized success rate than his observed rate, which is consistent with him having faced a more difficult mix of attempts than the league-wide reference distribution. Conversely, several players have standardized success rates lower than their empirical rates, suggesting that they either faced a less difficult mix of attempts or were particularly well matched to the situations in which they were used. 

\begin{figure}
    \centering
    \includegraphics[width=0.6\linewidth]{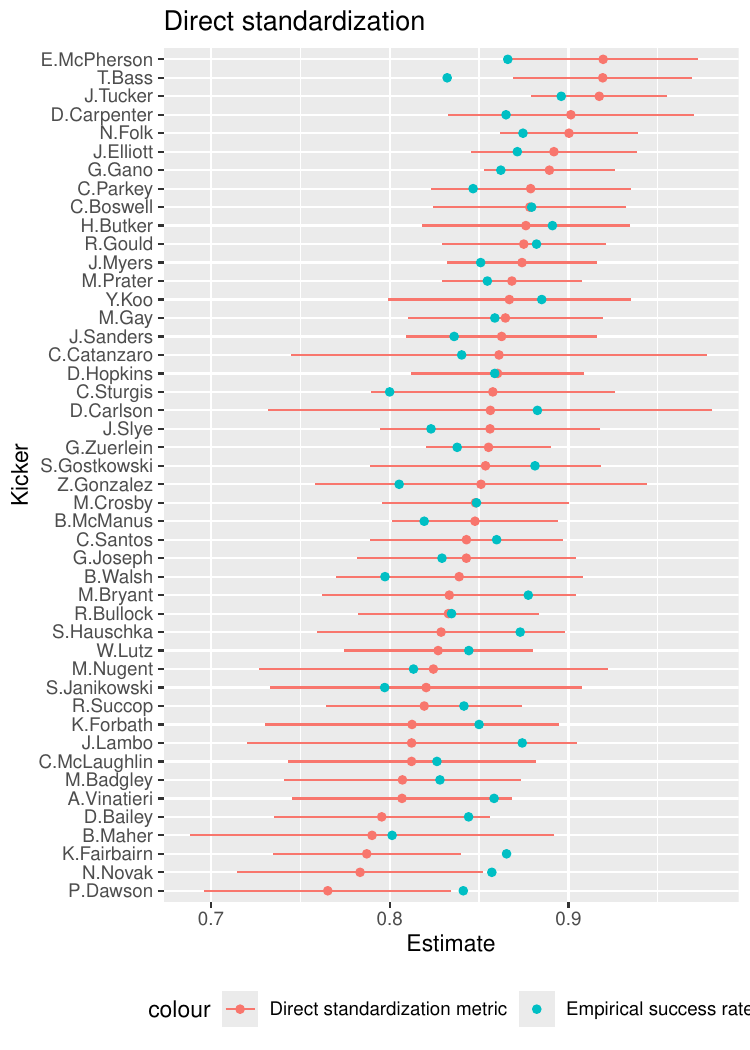}
    \caption{\small Directly standardized field goal success rates for each NFL kicker in the analysis dataset. The red points indicate the point estimate and the horizontal lines depict the 95\% confidence interval. The blue points show the empirical field goal success rates. Kickers are arranged in descending order of their point estimate.}
    \label{fig:nfl-direct-standardization}
\end{figure}

To display the indirect and random replacement metrics, we use funnel plots, as is common in healthcare provider profiling \citep{spiegelhalter2005funnel, griffen2012funnel}. Funnel plots focus attention on statistically meaningful deviations from expectation while avoiding misleading rank-based interpretations. Funnel plots show the precision of the estimate on the $x$-axis, separating estimates by their statistical certainty. The metric is placed on the $y$-axis. Points falling above or below the plotted control lines are significantly different from zero at different confidence thresholds.

Funnel plots for the indirect and random replacement metrics are shown in Figure~\ref{fig:nfl-funnels}. Interestingly, none of the NFL kickers had estimated metrics significantly below zero; this is consistent with the hard truth that poorly performing kickers are unlikely to be retained in the league. Conversely, several kickers stand out as performing above expectation. Justin Tucker has statistically significant positive values for both the indirect and random replacement metrics.

Interpreting metrics based on indirect standardization can be challenging because the reference distribution is based on the focal kicker's attempt profile. One way of thinking about this is that the comparison between two kickers is based on the similarity in the propensity scores estimated for every attempt. Players with similar propensity scores will have higher weight in the counterfactual comparisons at the heart of the indirect standardization metric. To illustrate this visually, Figure~\ref{fig:nfl-bass-comparison} compares the cross-fitted propensity scores for Tyler Bass, Stephen Gostkowski, and Phil Dawson for every field goal attempt in the analysis dataset. Bass and Gostkowski have correlated propensity scores, reflecting that they faced a similar profile of field goal attempts. In particular, Bass and Gostkowski both played for teams with home stadiums featuring artificial turf during the analysis period; as such, both players have a higher estimated propensity of attempts on artificial turf than on grass. On the other hand, Dawson played for the San Francisco 49ers and Arizona Cardinals during this period, both of which have home stadiums with natural grass fields. This partly explains why his propensity score profile differs from that of Bass. The implication of this for Tyler Bass's indirectly standardized metric is that his observed success rate is compared to a counterfactual success rate had attempts been assigned to other players, with similar players like Stephen Gostkowski being more likely to be chosen than Phil Dawson in the counterfactual assignment.

Extending this analysis, one way to visualize the similarities between players is through clustering of their estimated propensity scores. To illustrate, we construct a distance measure between kickers $a$ and $a'$ as the Euclidean distance between their normalized propensity scores:
\begin{align}
    d(a, a')^2 = \sum_{i=1}^n \left\{\bar{\pi}(a \mid X_i) - \bar{\pi}(a' \mid X_i) \right\}^2, 
\end{align}
where $\bar{\pi}(a \mid X_i) = \hat{\pi}(a \mid X_i) / \sum_{j=1}^n \hat{\pi}(a \mid X_j)$. Then, we applied a hierarchical clustering algorithm (using the \texttt{hclust} function in \texttt{R}) and visualized the result as a dendrogram (Figure~\ref{fig:nfl-cluster-dendrogram}). The result helps contextualize the indirect standardization parameters: loosely speaking, kickers close to each other in the dendrogram have more similar propensity score distributions, and are compared with higher weight in the indirect standardization results. Tying back to the Tyler Bass example, we see that he is placed closely to Stephen Gostkowski, who played in the same division, and Daniel Carpenter, who also played for the Buffalo Bills.

\begin{figure}
    \centering
    \includegraphics[width=\linewidth]{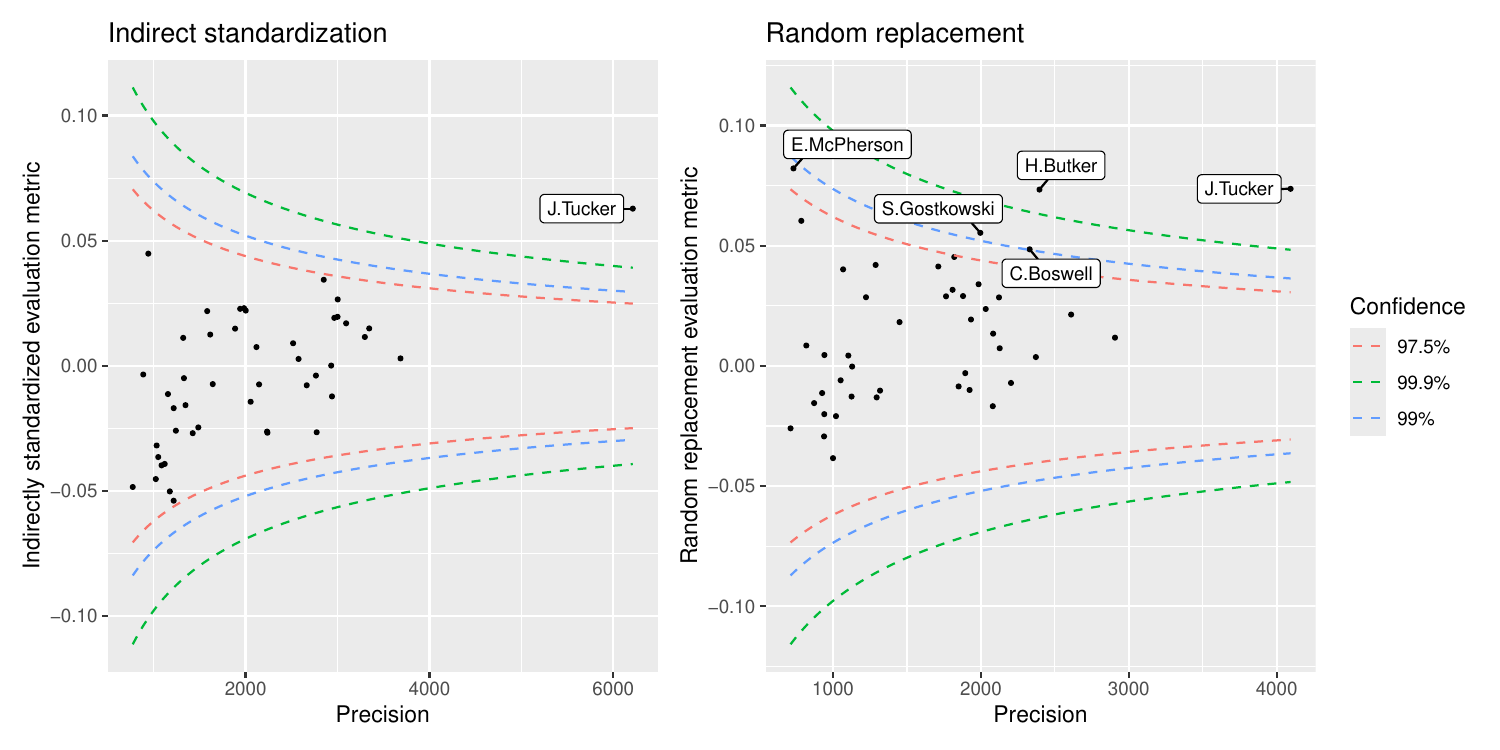} 
    \caption{\small Funnel plots visualizing the estimated indirect standardization (left) and random replacement (right) metrics \eqref{eq:indirect-metrics} for each NFL kicker in the case study. The $x$-axis shows the statistical precision of the point estimate, and the $y$-axis is the point estimate. Dashed lines indicate control limits at the 97.5\%, 99\%, and 99.9\% confidence levels. }
    \label{fig:nfl-funnels}
\end{figure}

\begin{figure}
    \centering
    \includegraphics[width=0.9\linewidth]{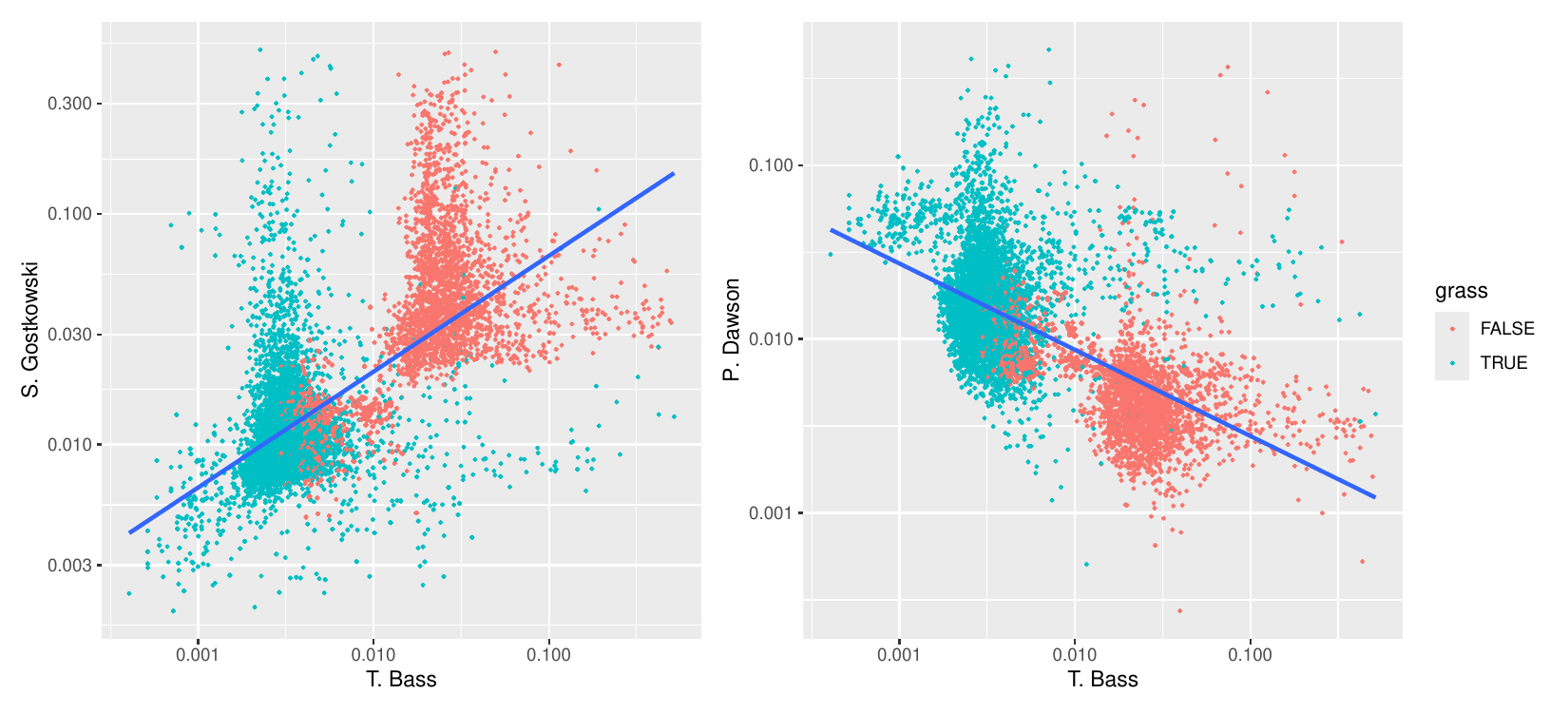}
    \caption{\small Comparison of the cross-fitted propensity score estimates for each field goal attempt in the analysis dataset for S. Gostkowski, P. Dawson, and T. Bass. Each point represents a single field goal attempt. Attempts performed on a grass surface are shown in blue, and attempts on artificial turf are shown in red. The linear best-fit line is included to emphasize the correlation between propensity score estimates for each player.}
    \label{fig:nfl-bass-comparison}
\end{figure}

\begin{figure}
    \centering
    \includegraphics[width=0.9\linewidth]{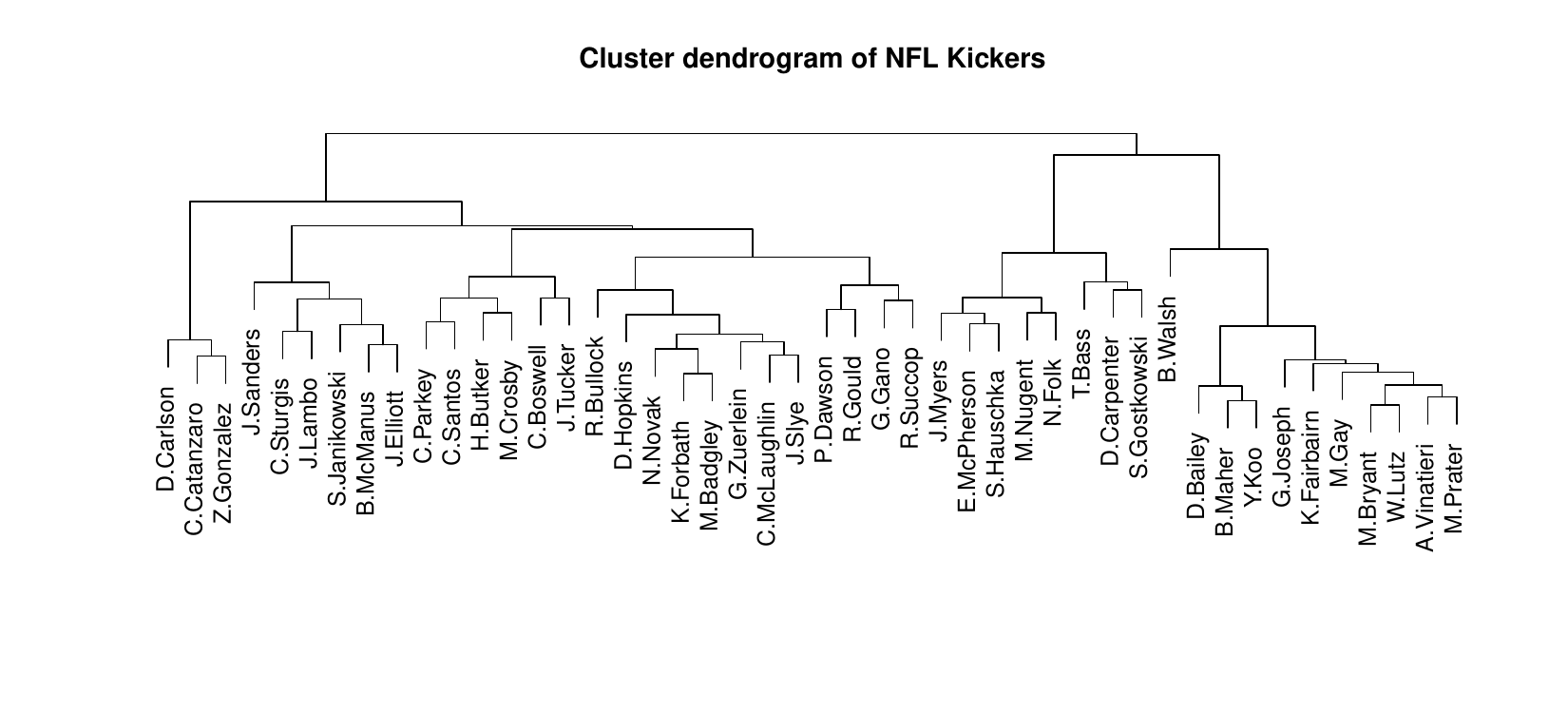}
    \caption{\small Cluster dendrogram of NFL kickers based on the Euclidean distance between their normalized propensity score estimates.}
    \label{fig:nfl-cluster-dendrogram}
\end{figure}

\subsection{MLB Batters}
In the second case study, we define and estimate evaluation metrics for MLB batters. The unit of analysis is an at-bat, and the outcome of interest is whether the batter obtained a legal hit of any kind. We use data from all 2025 regular season games, which we collected using the \texttt{baseballr} package \citep{petti2024}. The raw data comprise 163,664 at-bats taken by 673 unique players. We focus our analysis on players who reliably participate in every game (so-called everyday players), which we define as players who played in more than 150 games or had more than 600 plate appearances. After applying this constraint and filtering by complete cases, our analysis dataset contains 40,107 at-bats by 84 players with an overall batting average of .260.  

To characterize each at-bat, we include common measures of pitcher strength as covariates, including weighted on-base average against (wOBAA), strikeout rate, walk rate, hard-hit rate, barrel rate, swing rate, as well as pitcher handedness and the number of outs recorded prior to the at-bat. For each batter we estimated all three evaluation metrics. The required nuisance parameters $\mu_{\P}(a, X) = \E_{\P}(Y \mid A = a, X)$ and $\pi_{\P}(a \mid X) = \P(A = a \mid X)$ were estimated using a Super Learner ensemble with the \texttt{mean}, \texttt{glm}, and \texttt{lightgbm} learners and 5-fold cross-fitting. 

Assigning a causal interpretation to the estimated metrics rests on the validity of the identifying assumptions outlined in Section 2. When using observational data, there is the possibility of unmeasured confounding, i.e. the existence of relevant predictor variables which were not measured or not included in the analysis. This is certainly possible in this context, as we only incorporated measures of pitcher strength and game situation. For example, park factors like ballpark dimensions and weather conditions could influence the probability of obtaining a legal hit, but were not included in this analysis. The positivity assumption requires each batter to have a positive probability of taking every type of at-bat (in terms of the included predictors). This could be violated in practice, for example, if a particular batter only faced left-handed pitchers over the course of an entire season. We check empirically for positivity violations by analyzing the estimated propensity scores.   

After fitting the required nuisance estimators and checking the 3,368,988 estimated propensity scores, we do find some evidence for practical positivity violations. The mean propensity score is 0.0119, with 75 percent of the estimated scores between 0.0110 and 0.0127. The average probability for any batter in the dataset taking any particular attempt is just over 1 percent. The inverse weights computed from these values are likely increasing the width of the estimated 95\% confidence interval for each player.

In the context of the MLB case study, the direct standardization metric may be interpreted as the 2025 season batting average that would have been obtained by a particular player if all at-bats had been taken by that player. Because the reference distribution is the same across all batters, the direct standardization can be used to compare players. The estimated directly standardized batting averages for all 84 players, ranked in descending order, are shown in Figure \ref{fig:mlb-direct-standardization}. The top five players by directly standardized batting average are Aaron Judge, Freddie Freeman, Bo Bichette, Vladimir Guerrero Jr., and Xavier Edwards. The estimated confidence intervals for all 84 players contain their observed (non-standardized) batting averages. 

The leaderboard of standardized batting averages (Figure \ref{fig:mlb-direct-standardization}) ranks big names like Shohei Ohtani, Juan Soto and Kyle Schwarber outside the top 20. Recall the outcome of interest was obtaining a hit -- a single, double, triple and home run are all treated the same. Hitters like Schwarber are penalized, in a sense, by virtue of the outcome selected. Despite hitting 56 home runs, Schwarber also struck out 197 times, yielding an observed batting average of .240. The problems induced by collapsing performance, which is multi-faceted, to a single metric are well known, and there is no shortage of composite statistics in baseball (e.g. weighted on base average or on-base plus slugging) which attempt to account for different kinds of offensive capability. The novel methods described in this work can likewise be customized to rate offensive performance in alternative ways. 

Following the NFL kicker case study, we compute the contrasts $\widehat{\Delta}^\mathsf{indirect}_a$ and $\widehat{\Delta}^\mathsf{rand}_a$ for each batter. Indirect standardization measures the difference between a particular batter's empirical batting average and the estimated batting average that would have been observed had their at-bats been randomly alsigned to players with similar at-bat profiles (left-hand side of Figure \ref{fig:mlb-funnels}). We emphasize again that each estimate in this case is player specific and therefore cannot be directly compared across batters.

To make the notion of ``similar at-bat profiles'' more concrete, we applied a clustering procedure to the estimated propensity scores, following the NFL case study (Figure \ref{fig:mlb-cluster-dendrogram}). A noticeable feature of the resulting clustering is the isolation of Bryce Harper in the dendrogram. To investigate this, we compared the distribution of Harper's propensity scores with several other players (Appendix Figure~\ref{fig:mlb-harper}).  Harper's distribution is distinct from that of other batters in that it is noticeably more skewed. We further explored this by conducting a descriptive analysis of Bryce Harper's at-bats. Specifically, we fit a logistic classifier to predict whether a given at-bat was taken by Bryce Harper versus any other batter. For comparison, we fit an analogous model for Pete Alonso. For Harper, the coefficient for pitcher handedness had a larger magnitude than for Alonso, suggesting handedness was a significant predictor of a Harper at-bat, but not an Alonso at-bat. Examining the empirical proportion of at-bats against right-handed pitchers revealed that Harper had the lowest proportion of right-handed opponents in the analysis dataset. This could reflect teams strategically deploying left-handed pitchers against Harper, who bats left-handed, or it may simply be an artifact of the single-season sample. From the perspective of interpreting the indirect standardization parameter, these descriptive results indicate that Harper is compared with greater weight to other batters who also faced relative more left-handed pitchers.

Finally, the contrast between players' empirical batting averages and their replacement standardization metrics is shown in the right-hand side of Figure~\ref{fig:mlb-funnels}. This metric is interpreted as the difference between each players empirical batting average and the estimated batting average which would have been observed for a particular player, had their at-bats been re-assigned at uniformly at random to any player in the analysis dataset. In contrast to the NFL kicker analysis, several MLB batters have estimated indirect and replacement contrasts that are below zero with statistical significance. Unlike the NFL kickers, whose role is narrowly defined, most of the MLB players must also play defense (the exception being designated hitters). It often happens that players can find a spot in the everyday lineup even if they do not hit well on average, and rather fill a defensive need or can hit for moderate power (for example, Willy Adames and Taylor Ward). Injuries which are not season-ending but hamper performance (e.g. Anthony Volpe and Lars Nootbaar) may also be a contributing factor to lower estimated evaluation metrics. 

\begin{figure}
    \centering
    \includegraphics[width=\linewidth]{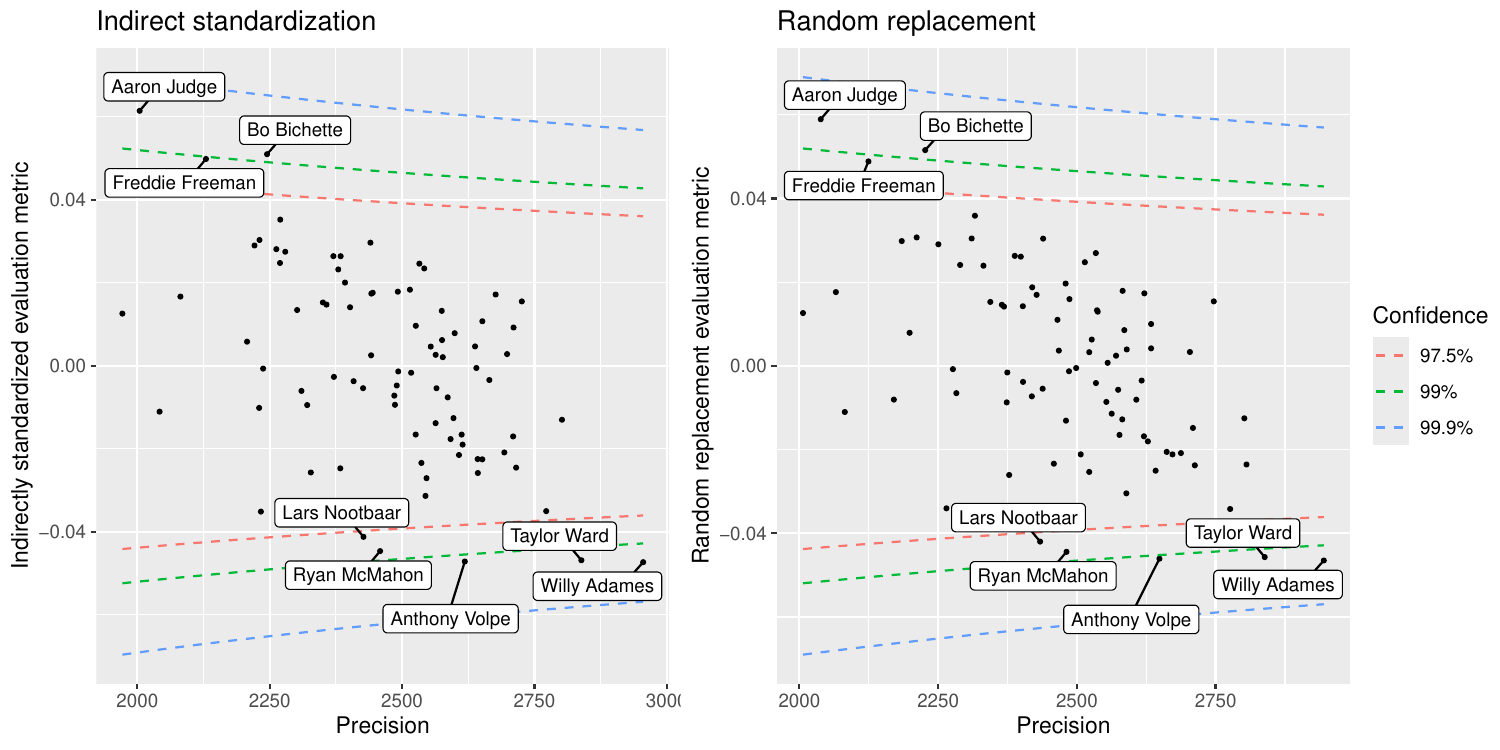} 
    \caption{\small Funnel plots visualizing the estimated indirect standardization (left) and random replacement (right) metrics for each MLB batter in the case study. The $x$-axis shows the statistical precision of the point estimate, and the $y$-axis is the point estimate. Dashed lines indicate control limits at the 97.5\%, 99\% and 99.9\% confidence levels.}
    \label{fig:mlb-funnels}
\end{figure}

\begin{figure}
    \centering
    \includegraphics[width=0.9\linewidth]{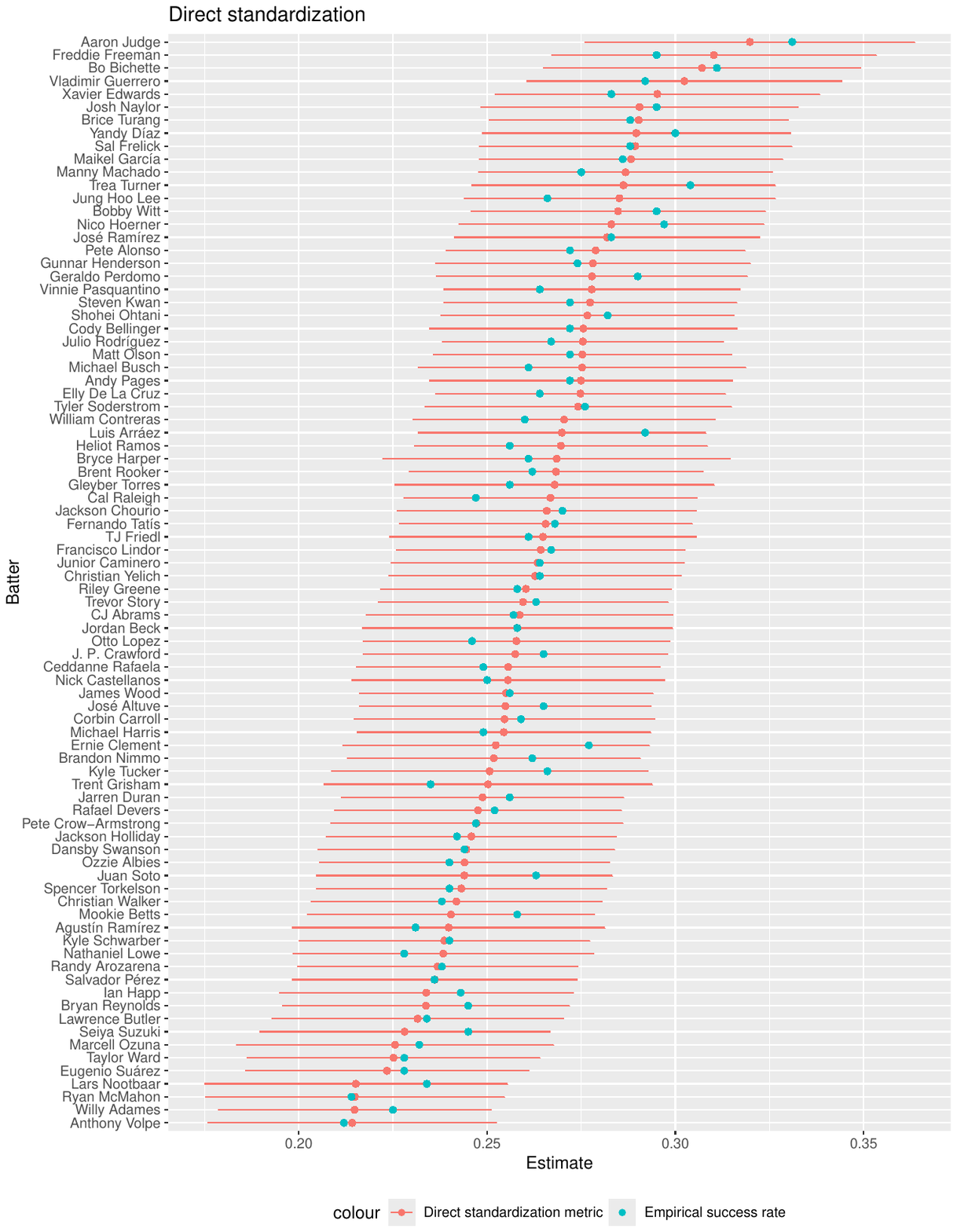}
    \caption{\small Directly standardized batting averages for each MLB batter in the analysis dataset.  The red points indicate the point estimate and the horizontal lines depict the 95\% confidence interval. The blue points show the empirical batting average for 2025. Batters are arranged in descending order of their point estimate.}
    \label{fig:mlb-direct-standardization}
\end{figure}

\begin{figure}
    \centering
    \includegraphics[width=0.95\linewidth]{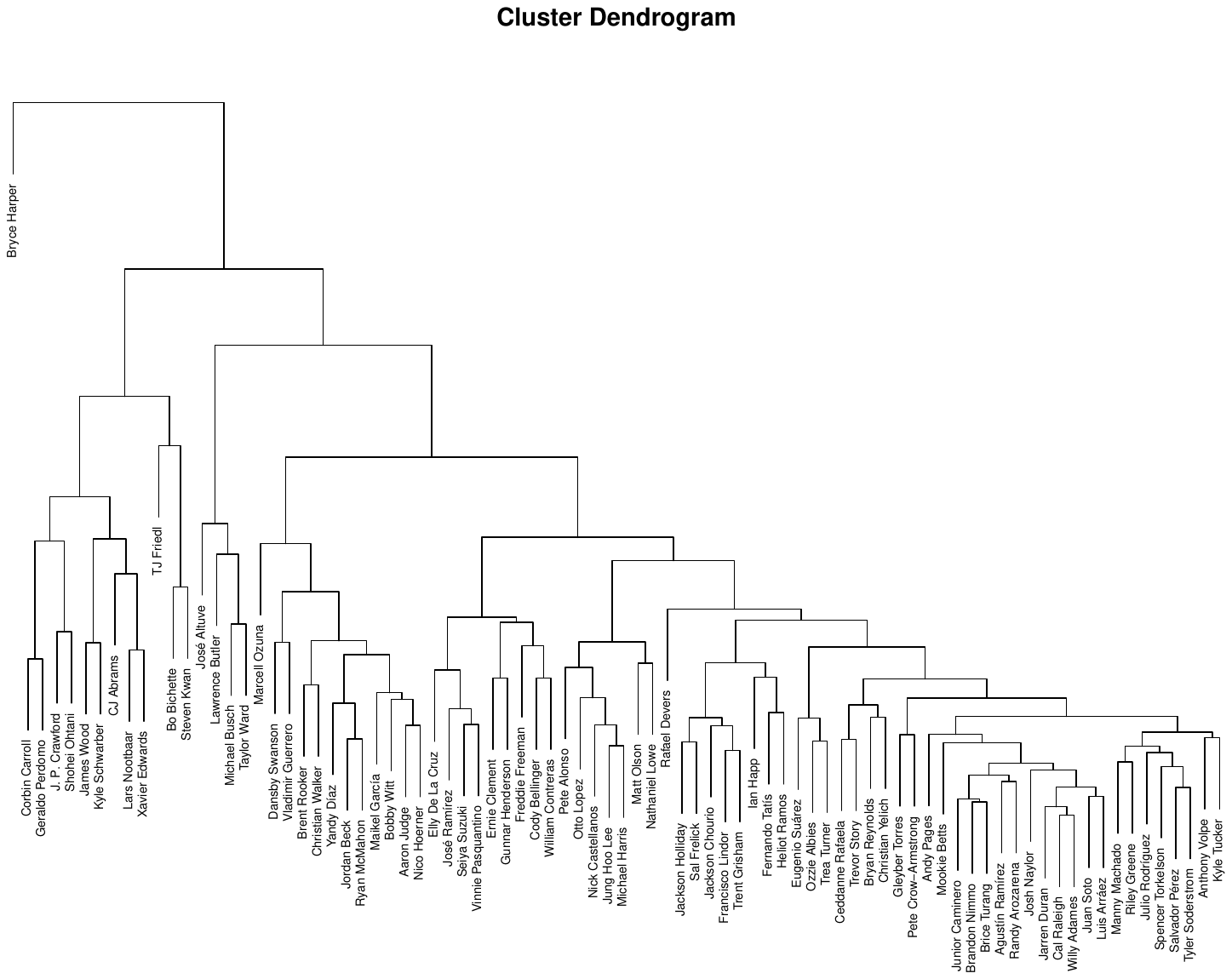}
    \caption{\small Cluster dendrogram of MLB batters based on the Euclidean distance between their normalized propensity score estimates.}
    \label{fig:mlb-cluster-dendrogram}
\end{figure}

\section{Discussion}
In this article, we posed player evaluation based on success over a series of attempts as a causal inference problem and proposed a general class of causal evaluation metrics. We investigated several estimands, beginning with direct and indirect standardization parameters that arise in healthcare provider profiling. We also introduced a ``performance above random replacement'' metric, which lies between direct and indirect standardization. Like indirect standardization, the random replacement metric standardizes to each player's attempt distribution, so players cannot be directly compared to one another. However, its interpretation is simpler than that of indirect parameters, because the stochastic intervention imagines reassigning attempts uniformly at random to any player in the league, rather than favoring players with similar attempt profiles. Additional estimands can be defined through alternative choices of stochastic intervention and conditioning sets, allowing metrics to be tailored to specific use cases and target audiences.

The case studies highlight important differences between healthcare provider profiling and sports player evaluation. In provider profiling, positivity violations are often severe, because providers can have highly specialized patient populations (as an extreme example, pediatric hospitals never treat the adult population). This lack of overlap makes estimation of direct standardization parameters untenable in many settings, which is why indirect standardization is widely used despite its more challenging interpretation. In our sports case studies, however, although players had different attempt profiles, we did not observe pronounced positivity violations, making direct standardization and random replacement parameters feasible to estimate in practice. Whether this holds in other sports settings is an interesting question for future research.

Our estimation approach focused on modern nonparametric inference methods. However, particularly in sports settings, the choice of estimator may depend on the goals and intended use of the metric. If minimizing bias is paramount, then using flexible ``black-box'' algorithms as inputs to TMLE may be warranted. For example, this might be the case for team managers seeking to make decisions based on the best available performance measures. On the other hand, if interpretability of the underlying prediction models is important, then a substitution estimator based on a parametric outcome model (such as logistic regression) may be preferable, for instance for fans or journalists who wish to to straightforwardly relate input data to the output metric. By separating the causal definition of our estimands from the estimation strategy, our framework allows for different choices of estimation strategy depending on the context. 

By framing player evaluation as a causal inference problem and relating it to healthcare provider profiling, this work opens the door to importing and adapting additional techniques from those literatures. For example, causal sensitivity analysis techniques \citep{robins2000sensitivity, diaz2013sensitivity, nabi2024sensitivity} could be applied to understand how unmeasured confounding or positivity violations affect player evaluation metrics. From the provider profiling side , statistical shrinkage techniques developed in that context \citep{mackenzie2015shrinkage, susmann2025penalized} could be combined with existing shrinkage-based approaches in sports analytics \citep{efron1977stein, osborne2017}, potentially yielding player ratings with better statistical properties while preserving a causal interpretation.


\bibliography{references}

@book{pearl2009, 
    place={Cambridge}, 
    edition={2}, 
    title={Causality}, 
    publisher={Cambridge University Press}, 
    author={Pearl, Judea}, 
    year={2009}
}

@book{vanderLaanRose11,
	Address = {New York},
	Author = {{van der Laan}, Mark J and Rose, Sherri},
	Publisher = {Springer},
	Title = {Targeted Learning: Causal Inference for Observational and Experimental Data},
	Year = 2011}

@article{mackenzie2015shrinkage,
title = {A Primer on Using Shrinkage to Compare In-Hospital Mortality Between Centers},
journal = {The Annals of Thoracic Surgery},
volume = {99},
number = {3},
pages = {757-761},
year = {2015},
issn = {0003-4975},
doi = {https://doi.org/10.1016/j.athoracsur.2014.11.039},
url = {https://www.sciencedirect.com/science/article/pii/S0003497514022036},
author = {Todd A. MacKenzie and Gary L. Grunkemeier and Gary K. Grunwald and A. James O’Malley and Chad Bohn and YingXing Wu and David J. Malenka}
}

@Article{daignault2019mediation,
author={Daignault, Katherine
and Lawson, Keith A.
and Finelli, Antonio
and Saarela, Olli},
title={Causal Mediation Analysis for Standardized Mortality Ratios},
journal={Epidemiology},
year={2019},
volume={30},
number={4},
issn={1044-3983},
url={https://journals.lww.com/epidem/fulltext/2019/07000/causal_mediation_analysis_for_standardized.10.aspx}
}

@article{shahian2020misuse,
author={Shahian, David M.
and Kozower, Benjamin D.
and Fernandez, Felix G.
and Badhwar, Vinay
and O'Brien, Sean M.},
title={The Use and Misuse of Indirectly Standardized, Risk-Adjusted Outcomes and Star Ratings},
journal={The Annals of Thoracic Surgery},
year={2020},
month={May},
day={01},
publisher={Elsevier},
volume={109},
number={5},
pages={1319-1322},
issn={0003-4975},
doi={10.1016/j.athoracsur.2019.09.010},
url={https://doi.org/10.1016/j.athoracsur.2019.09.010}
}

@article{normand1997profiling,
author = {Sharon-Lise T. Normand, Mark E. Glickman and Constantine A. Gatsonis},
title = {Statistical Methods for Profiling Providers of Medical Care: Issues and Applications},
journal = {Journal of the American Statistical Association},
volume = {92},
number = {439},
pages = {803--814},
year = {1997},
publisher = {Taylor \& Francis},
doi = {10.1080/01621459.1997.10474036},
URL = {https://doi.org/10.1080/01621459.1997.10474036},
eprint = {https://doi.org/10.1080/01621459.1997.10474036}
}

@inproceedings{clark2013going,
  title={Going for three: Predicting the likelihood of field goal success with logistic regression},
  author={Clark, Torin K. and Johnson, Aaron W. and Stimpson, Alexander J.},
  booktitle={The 7th Annual MIT Sloan Sports Analytics Conference},
  year={2013}
}

@article{bilder1998placekicks,
    author = {Christopher R. Bilder and Thomas M. Loughin},
    title = {“{It's} {Good!}” an Analysis of the Probability of Success for Placekicks},
    journal = {CHANCE},
    volume = {11},
    number = {2},
    pages = {20--30},
    year = {1998},
    publisher = {Taylor \& Francis},
    doi = {10.1080/09332480.1998.10542087},
    URL = {https://doi.org/10.1080/09332480.1998.10542087},
    eprint = {https://doi.org/10.1080/09332480.1998.10542087}
}

@article{berry1985rating,
author = {Donald A. Berry and Timothy D. Berry},
title = {The Probability of a Field Goal: Rating Kickers},
journal = {The American Statistician},
volume = {39},
number = {2},
pages = {152--155},
year = {1985},
publisher = {Taylor \& Francis},
doi = {10.1080/00031305.1985.10479418},
URL = {https://www.tandfonline.com/doi/abs/10.1080/00031305.1985.10479418},
eprint = {https://www.tandfonline.com/doi/pdf/10.1080/00031305.1985.10479418}
}

@article{berry2004,
author = {Scott M. Berry and Craig Wood},
title = {A Statistician Reads the Sports Pages},
journal = {CHANCE},
volume = {17},
number = {4},
pages = {47--51},
year = {2004},
publisher = {Taylor \& Francis},
doi = {10.1080/09332480.2004.10554926},
URL = {https://doi.org/10.1080/09332480.2004.10554926},
eprint = {https://doi.org/10.1080/09332480.2004.10554926}
}

@inbook{pasteur2016evaluation,
    author = {Pasteur, R. Drew and David, John A.},
    title = {Evaluation of Quarterbacks and Kickers},
    publisher = {Chapman and Hall/CRC},
    year = {2016},
    chapter = {7},
    booktitle = {Handbook of Statistical Methods and Analyses in Sports},
    address={New York}
}

@Manual{ho2025nflreadr,
    title = {nflreadr: Download 'nflverse' Data},
    author = {Tan Ho and Sebastian Carl},
    year = {2025},
    note = {R package version 1.5.0},
    url = {https://CRAN.R-project.org/package=nflreadr},
    doi = {10.32614/CRAN.package.nflreadr},
  }

@InProceedings{bacco2024plusminus,
  title = 	 {A causality-inspired  plus-minus model for player evaluation in team sports},
  author =       {Bacco, Caterina De and Wang, Yixin and Blei, David},
  booktitle = 	 {Proceedings of the Third Conference on Causal Learning and Reasoning},
  pages = 	 {769--792},
  year = 	 {2024},
  editor = 	 {Locatello, Francesco and Didelez, Vanessa},
  volume = 	 {236},
  series = 	 {Proceedings of Machine Learning Research},
  month = 	 {01--03 Apr},
  publisher =    {PMLR},
  url = 	 {https://proceedings.mlr.press/v236/bacco24a.html}
}

@article{susmann2025providerevaluation,
    author = {Susmann, Herbert P and Li, Yiting and McAdams-DeMarco, Mara A and Díaz, Iván and Wu, Wenbo},
    title = {Doubly robust nonparametric efficient estimation for healthcare provider evaluation},
    journal = {Journal of the Royal Statistical Society Series A: Statistics in Society},
    pages = {qnaf145},
    year = {2025},
    month = {10},
    issn = {0964-1998},
    doi = {10.1093/jrsssa/qnaf145},
    url = {https://doi.org/10.1093/jrsssa/qnaf145},
    eprint = {https://academic.oup.com/jrsssa/advance-article-pdf/doi/10.1093/jrsssa/qnaf145/64567718/qnaf145.pdf},
}

@article{daignault2017doublerobust,
    url = {https://doi.org/10.1515/em-2016-0016},
    title = {Doubly Robust Estimator for Indirectly Standardized Mortality Ratios},
    author = {Katherine Daignault and Olli Saarela},
    pages = {20160016},
    volume = {6},
    number = {1},
    journal = {Epidemiologic Methods},
    doi = {doi:10.1515/em-2016-0016},
    year = {2017},
    lastchecked = {2024-03-14}
}

@book{vanderVaart98,
	Author = {A. W. van der Vaart},
	Pages = {62},
	Publisher = {Cambridge University Press},
	Title = {Asymptotic Statistics},
	Year = {1998}}

@misc{kennedy2023semiparametric,
      title={Semiparametric doubly robust targeted double machine learning: a review}, 
      author={Edward H. Kennedy},
      year={2023},
      eprint={2203.06469},
      archivePrefix={arXiv},
      primaryClass={stat.ME}
}

@article{fisher2021,
author = {Aaron Fisher and Edward H. Kennedy},
title = {Visually Communicating and Teaching Intuition for Influence Functions},
journal = {The American Statistician},
volume = {75},
number = {2},
pages = {162--172},
year = {2021},
publisher = {Taylor \& Francis},
doi = {10.1080/00031305.2020.1717620},
URL = {https://doi.org/10.1080/00031305.2020.1717620},
eprint = {https://doi.org/10.1080/00031305.2020.1717620}
}

@book{bickel1993efficient,
  title={Efficient and adaptive estimation for semiparametric models},
  author={Bickel, Peter J and Klaassen, Chris AJ and Ritov, Ya’acov and Klaassen, J and Wellner, Jon A},
  volume={4},
  year={1993},
  publisher={Springer}
}

@book{vanderVaart1996,
author="van der Vaart, Aad W.
and Wellner, Jon A.",
title="Weak Convergence and Empirical Processes: With Applications to Statistics",
year="1996",
publisher="Springer New York",
address="New York, NY"
}

@Inbook{zheng2011cvtmle,
author="Zheng, Wenjing
and van der Laan, Mark J.",
title="Cross-Validated Targeted Minimum-Loss-Based Estimation",
bookTitle="Targeted Learning: Causal Inference for Observational and Experimental Data",
year="2011",
publisher="Springer New York",
address="New York, NY",
pages="459--474",
isbn="978-1-4419-9782-1",
doi="10.1007/978-1-4419-9782-1{\_}27",
url="https://doi.org/10.1007/978-1-4419-9782-1\_27"
}

@article{vanderlaan2007superlearner,
url = {https://doi.org/10.2202/1544-6115.1309},
title = {Super Learner},
author = {Mark J. van der Laan and Eric C Polley and Alan E. Hubbard},
volume = {6},
number = {1},
journal = {Statistical Applications in Genetics and Molecular Biology},
doi = {doi:10.2202/1544-6115.1309},
year = {2007},
lastchecked = {2024-06-06}
}

@article{vanderLaan2006tmle,
url = {https://doi.org/10.2202/1557-4679.1043},
title = {Targeted Maximum Likelihood Learning},
author = {Mark J. van der Laan and Daniel Rubin},
volume = {2},
number = {1},
journal = {The International Journal of Biostatistics},
doi = {doi:10.2202/1557-4679.1043},
year = {2006},
lastchecked = {2024-06-06}
}

@article{chernozhukov2018dml,
    author = {Chernozhukov, Victor and Chetverikov, Denis and Demirer, Mert and Duflo, Esther and Hansen, Christian and Newey, Whitney and Robins, James},
    title = "{Double/debiased machine learning for treatment and structural parameters}",
    journal = {The Econometrics Journal},
    volume = {21},
    number = {1},
    pages = {C1-C68},
    year = {2018},
    month = {01},
    issn = {1368-4221},
    doi = {10.1111/ectj.12097},
    url = {https://doi.org/10.1111/ectj.12097},
    eprint = {https://academic.oup.com/ectj/article-pdf/21/1/C1/27684918/ectj00c1.pdf},
}

@book{pfanzagl1982,
author="Pfanzagl, J.",
title="Contributions to a General Asymptotic Statistical Theory",
year="1982",
publisher="Springer New York",
address="New York, NY"
}

@article{schick1986semiparametric,
    author = {Anton Schick},
    title = {On Asymptotically Efficient Estimation in Semiparametric Models},
    volume = {14},
    journal = {The Annals of Statistics},
    number = {3},
    publisher = {Institute of Mathematical Statistics},
    pages = {1139 -- 1151},
    year = {1986},
    doi = {10.1214/aos/1176350055},
    URL = {https://doi.org/10.1214/aos/1176350055}
}

@inbook{robins2008higherorder,
   title={Higher order influence functions and minimax estimation of nonlinear functionals},
   ISBN={0940600749},
   url={http://dx.doi.org/10.1214/193940307000000527},
   DOI={10.1214/193940307000000527},
   booktitle={Probability and Statistics: Essays in Honor of David A. Freedman},
   publisher={Institute of Mathematical Statistics},
   author={Robins, James and Li, Lingling and Tchetgen, Eric and van der Vaart, Aad},
   year={2008},
   pages={335–421}
}

@Inbook{Hubbard2011,
author="Hubbard, Alan E.
and Jewell, Nicholas P.
and van der Laan, Mark J.",
title="Direct Effects and Effect Among the Treated",
bookTitle="Targeted Learning: Causal Inference for Observational and Experimental Data",
year="2011",
publisher="Springer New York",
address="New York, NY",
pages="133--143",
isbn="978-1-4419-9782-1",
doi="10.1007/978-1-4419-9782-1_8",
url="https://doi.org/10.1007/978-1-4419-9782-1_8"
}

@article{pasteur2014expectation,
    url = {https://doi.org/10.1515/jqas-2013-0039},
    title = {An expectation-based metric for {NFL} field goal kickers},
    author = {R. Drew Pasteur and Kyle Cunningham-Rhoads},
    pages = {49--66},
    volume = {10},
    number = {1},
    journal = {Journal of Quantitative Analysis in Sports},
    doi = {doi:10.1515/jqas-2013-0039},
    year = {2014},
    lastchecked = {2026-02-17}
}

@article{baumer2015,
    url = {https://doi.org/10.1515/jqas-2014-0098},
    title = {{openWAR}: An open source system for evaluating
overall player performance in major league
baseball},
    author = {Baumer, B. and Jensen, S.T. and Matthews, G.J.},
    pages = {69--84},
    volume = {11},
    number = {2},
    journal = {Journal of Quantitative Analysis in Sports},
    doi = {},
    year = {2015},
    lastchecked = {2026-02-18}
}

@book{kosorok2008,
  title     = "Introduction to empirical processes and semiparametric inference",
  author    = "Kosorok, Michael R",
  publisher = "Springer",
  series    = "Springer Series in Statistics",
  edition   =  2008,
  month     =  dec,
  year      =  2008,
  address   = "New York, NY",
  language  = "en"
}

@article{osborne2017,
    url = {https://doi.org/10.3233/JSA-16140},
    title = {Shrinkage estimation of NFL field goal success probabilities},
    author = {Osborne, J.A. and Levine, R.},
    pages = {129--146},
    volume = {3},
    number = {2},
    journal = {Journal of Sports Analytics},
    doi = {},
    year = {2017},
    lastchecked = {2026-02-18}
}

@book{iezzoni2003,
    author={Iezzoni, Lisa I.},
    title={Risk adjustment for measuring health care outcomes},
    year={2003},
    edition={3rd ed.},
    publisher={Health Administration Press},
    address={Chicago, IL},
    note={Includes bibliographical references (p. 411-478) and index.},
    isbn={9781567932072}
}

@article{petersen2012positivity,
author = {Maya L Petersen and Kristin E Porter and Susan Gruber and Yue Wang and Mark J van der Laan},
title ={Diagnosing and responding to violations in the positivity assumption},
    journal = {Statistical Methods in Medical Research},
    volume = {21},
    number = {1},
    pages = {31-54},
    year = {2012},
    doi = {10.1177/0962280210386207},
        note ={PMID: 21030422},
    URL = {https://doi.org/10.1177/0962280210386207
    },
    eprint = {https://doi.org/10.1177/0962280210386207}
}

@article{terner2021basketball,
   author = "Terner, Zachary and Franks, Alexander",
   title = "Modeling Player and Team Performance in Basketball", 
   journal= "Annual Review of Statistics and Its Application",
   year = "2021",
   volume = "8",
   number = "Volume 8, 2021",
   pages = "1-23",
   doi = "https://doi.org/10.1146/annurev-statistics-040720-015536",
   url = "https://www.annualreviews.org/content/journals/10.1146/annurev-statistics-040720-015536",
   publisher = "Annual Reviews",
   issn = "2326-831X",
   type = "Journal Article",
}

@article{yam2019fourthdown,
author = {Derrick R. Yam and Michael J. Lopez},
title ={What was lost? A causal estimate of fourth down behavior in the {National} {Football} {League}},
journal = {Journal of Sports Analytics},
volume = {5},
number = {3},
pages = {153-167},
year = {2019},
doi = {10.3233/JSA-190294},
URL = {https://doi.org/10.3233/JSA-190294},
eprint = {https://doi.org/10.3233/JSA-190294}
}

@article{efron1977stein,
    address = {New York},
    issn = {0036-8733},
    journal = {Scientific American},
    lccn = {2006255042},
    number = {5},
    publisher = {Rufus Porter},
    title = {Stein's paradox in statistics},
    author={Bradley Efron and Carol Morris},
    volume = {236},
    year = {1977},
}

@article{koseler2017machinelearningbaseball,
    author = {Kaan Koseler and Matthew Stephan},
    title = {Machine Learning Applications in Baseball: A Systematic Literature Review},
    journal = {Applied Artificial Intelligence},
    volume = {31},
    number = {9-10},
    pages = {745--763},
    year = {2017},
    publisher = {Taylor \& Francis},
    doi = {10.1080/08839514.2018.1442991},
    URL = {https://doi.org/10.1080/08839514.2018.1442991},
    eprint = {https://doi.org/10.1080/08839514.2018.1442991}
}

@article{sanchez2024icing,
    author = {Adriana {Gonzalez Sanchez} and Sierra Martinez and Ron Yurko and Ryan Elmore and Brian Macdonald},
    title = {Beyond the Box Score: Does Icing the Field Goal Kicker Work in the {NFL}?},
    journal = {CHANCE},
    volume = {37},
    number = {3},
    pages = {41--48},
    year = {2024},
    publisher = {Taylor \& Francis},
    doi = {10.1080/09332480.2024.2415841},
    URL = {https://doi.org/10.1080/09332480.2024.2415841},
    eprint = {https://doi.org/10.1080/09332480.2024.2415841}
}

@BOOK{fujii2025machinelearningsports,
  title     = "Machine learning in sports",
  author    = "Fujii, Keisuke",
  publisher = "Springer Nature",
  month     =  apr,
  year      =  2025,
  address   = "Cham, Switzerland",
  copyright = "https://creativecommons.org/licenses/by/4.0",
  language  = "en"
}

@BOOK{brefeld2022machinelearningsports,
  title     = "Machine learning and data mining for sports analytics",
  editor    = "Brefeld, Ulf and Davis, Jesse and Van Haaren, Jan and
               Zimmermann, Albrecht",
  publisher = "Springer International Publishing",
  series    = "Communications in Computer and Information Science",
  edition   =  1,
  month     =  may,
  year      =  2022,
  address   = "Cham, Switzerland",
  language  = "en"
}

@article{weimer2023momentum,
    author = {Louis Weimer and Zachary C. Steinert-Threlkeld and Kevin Coltin},
    title ={A causal approach for detecting team-level momentum in {NBA} games},
    journal = {Journal of Sports Analytics},
    volume = {9},
    number = {2},
    pages = {117-132},
    year = {2023},
    doi = {10.3233/JSA-220592},
    URL = {https://doi.org/10.3233/JSA-220592},
    eprint = {https://doi.org/10.3233/JSA-220592}
}

@inproceedings{assis2020timeout,
author = {Assis, Niander and Assun\c{c}\~{a}o, Renato and Vaz-de-Melo, Pedro O. S.},
title = {Stop the Clock: Are Timeout Effects Real?},
year = {2020},
isbn = {978-3-030-67669-8},
publisher = {Springer-Verlag},
address = {Berlin, Heidelberg},
url = {https://doi.org/10.1007/978-3-030-67670-4_31},
doi = {10.1007/978-3-030-67670-4_31},
booktitle = {Machine Learning and Knowledge Discovery in Databases. Applied Data Science and Demo Track: European Conference, ECML PKDD 2020, Ghent, Belgium, September 14–18, 2020, Proceedings, Part V},
pages = {507–523},
numpages = {17},
location = {Ghent, Belgium}
}

@article{gibbs2022timeout,
    author = {Connor P. Gibbs and Ryan Elmore and Bailey K. Fosdick},
    title = {{The causal effect of a timeout at stopping an opposing run in the NBA}},
    volume = {16},
    journal = {The Annals of Applied Statistics},
    number = {3},
    publisher = {Institute of Mathematical Statistics},
    pages = {1359 -- 1379},
    year = {2022},
    doi = {10.1214/21-AOAS1545},
    URL = {https://doi.org/10.1214/21-AOAS1545}
}

@ARTICLE{kubatko2007basketball,
  title     = "A starting point for analyzing basketball statistics",
  author    = "Kubatko, Justin and Oliver, Dean and Pelton, Kevin and
               Rosenbaum, Dan T",
  journal   = "J. Quant. Anal. Sports",
  publisher = "Walter de Gruyter GmbH",
  volume    =  3,
  number    =  3,
  month     =  jan,
  year      =  2007
}

@article{macdonald2011plusminus,
    url = {https://doi.org/10.2202/1559-0410.1284},
    title = {A Regression-Based Adjusted Plus-Minus Statistic for {NHL} Players},
    author = {Brian Macdonald},
    volume = {7},
    number = {3},
    journal = {Journal of Quantitative Analysis in Sports},
    doi = {doi:10.2202/1559-0410.1284},
    year = {2011},
    lastchecked = {2026-02-19}
}

@misc{mcclean2025comparingcausalparameterstreatments,
      title={Comparing causal parameters with many treatments and positivity violations}, 
      author={Alec McClean and Yiting Li and Sunjae Bae and Mara A. McAdams-DeMarco and Iván Díaz and Wenbo Wu},
      year={2025},
      eprint={2410.13522},
      archivePrefix={arXiv},
      primaryClass={stat.ME},
      url={https://arxiv.org/abs/2410.13522}, 
}

@Manual{petti2024,
  title = {baseballr: Acquiring and Analyzing Baseball Data},
  author = {Bill Petti and Saiem Gilani},
  year = {2024},
  note = {R package version 1.6.0, 
https://github.com/BillPetti/baseballr},
  url = {https://billpetti.github.io/baseballr/},
}

@Article{arelbundock2024marginaleffects,
    title = {How to Interpret Statistical Models Using {marginaleffects} for {R} and {Python}},
    author = {Vincent Arel-Bundock and Noah Greifer and Andrew Heiss},
    journal = {Journal of Statistical Software},
    year = {2024},
    volume = {111},
    number = {9},
    pages = {1--32},
    doi = {10.18637/jss.v111.i09},
  }

@article{christiansen1997provider,
title = {Improving the Statistical Approach to Health Care Provider Profiling},
journal = {Annals of Internal Medicine},
author={Cindy L. Christiansen and Carl N. Morris},
volume = {127},
number = {8\_Part\_2},
pages = {764-768},
year = {1997},
doi = {10.7326/0003-4819-127-8\_Part\_2-199710151-00065},
    note ={PMID: 9382395},
URL = {https://doi.org/10.7326/0003-4819-127-8_Part_2-199710151-00065},
eprint = {https://doi.org/10.7326/0003-4819-127-8_Part_2-199710151-00065}
}

@article{delong1997profiling,
author = {DeLong, Elizabeth R. and Peterson, Eric D. and DeLong, David M. and Muhlbaier, Lawrence H. and Hackett, Suzanne and Mark, Daniel B.},
title = {Comparing risk-adjustment methods for provider profiling},
journal = {Statistics in Medicine},
volume = {16},
number = {23},
pages = {2645-2664},
doi = {https://doi.org/10.1002/(SICI)1097-0258(19971215)16:23<2645::AID-SIM696>3.0.CO;2-D},
url = {https://onlinelibrary.wiley.com/doi/abs/10.1002/%28SICI%291097-0258%2819971215%2916%3A23%3C2645%3A%3AAID-SIM696%3E3.0.CO%3B2-D},
eprint = {https://onlinelibrary.wiley.com/doi/pdf/10.1002/%28SICI%291097-0258%2819971215%2916%3A23%3C2645%3A%3AAID-SIM696%3E3.0.CO%3B2-D},
year = {1997}
}

@ARTICLE{diaz2013sensitivity,
  title    = "Sensitivity analysis for causal inference under unmeasured
              confounding and measurement error problems",
  author   = "D{\'\i}az, Iv{\'a}n and van der Laan, Mark J",
  journal  = "Int J Biostat",
  volume   =  9,
  number   =  2,
  pages    = "149--160",
  month    =  nov,
  year     =  2013,
  address  = "Germany",
  language = "en"
}

@article{griffen2012funnel,
author = {Griffen, David and Callahan, Charles D. and Markwell, Stephen and de la Cruz, Jonathan and Milbrandt, Joseph C. and Harvey, Timothy},
title = {Application of Statistical Process Control to Physician-specific Emergency Department Patient Satisfaction Scores: A Novel Use of the Funnel Plot},
journal = {Academic Emergency Medicine},
volume = {19},
number = {3},
pages = {348-355},
doi = {https://doi.org/10.1111/j.1553-2712.2012.01304.x},
url = {https://onlinelibrary.wiley.com/doi/abs/10.1111/j.1553-2712.2012.01304.x},
eprint = {https://onlinelibrary.wiley.com/doi/pdf/10.1111/j.1553-2712.2012.01304.x},
year = {2012}
}

@article{spiegelhalter2005funnel,
author = {Spiegelhalter, David J.},
title = {Funnel plots for comparing institutional performance},
journal = {Statistics in Medicine},
volume = {24},
number = {8},
pages = {1185-1202},
doi = {https://doi.org/10.1002/sim.1970},
url = {https://onlinelibrary.wiley.com/doi/abs/10.1002/sim.1970},
eprint = {https://onlinelibrary.wiley.com/doi/pdf/10.1002/sim.1970},
year = {2005}
}

@Article{breiman1996stacking,
author={Breiman, Leo},
title={Stacked regressions},
journal={Machine Learning},
year={1996},
month={Jul},
day={01},
volume={24},
number={1},
pages={49-64},
issn={1573-0565},
doi={10.1007/BF00117832},
url={https://doi.org/10.1007/BF00117832}
}

@article{rotnitzky2020mixedbias,
    author = {Rotnitzky, A and Smucler, E and Robins, J M},
    title = {Characterization of parameters with a mixed bias property},
    journal = {Biometrika},
    volume = {108},
    number = {1},
    pages = {231-238},
    year = {2020},
    month = {08},
    issn = {0006-3444},
    doi = {10.1093/biomet/asaa054},
    url = {https://doi.org/10.1093/biomet/asaa054},
    eprint = {https://academic.oup.com/biomet/article-pdf/108/1/231/36441108/asaa054.pdf},
}

@article{nabi2024sensitivity,
    author = {Nabi, Razieh and Bonvini, Matteo and Kennedy, Edward H and Huang, Ming-Yueh and Smid, Marcela and Scharfstein, Daniel O},
    title = {Semiparametric sensitivity analysis: unmeasured confounding in observational studies},
    journal = {Biometrics},
    volume = {80},
    number = {4},
    pages = {ujae106},
    year = {2024},
    month = {10},
    issn = {0006-341X},
    doi = {10.1093/biomtc/ujae106},
    url = {https://doi.org/10.1093/biomtc/ujae106},
    eprint = {https://academic.oup.com/biometrics/article-pdf/80/4/ujae106/61199725/ujae106.pdf},
}

@InProceedings{robins2000sensitivity,
    author="Robins, James M.
    and Rotnitzky, Andrea
    and Scharfstein, Daniel O.",
    editor="Halloran, M. Elizabeth
    and Berry, Donald",
    title="Sensitivity Analysis for Selection bias and unmeasured Confounding in missing Data and Causal inference models",
    booktitle="Statistical Models in Epidemiology, the Environment, and Clinical Trials",
    year="2000",
    publisher="Springer New York",
    address="New York, NY",
    pages="1--94",
    isbn="978-1-4612-1284-3"
}

@misc{susmann2025penalized,
      title={Asymptotically Efficient Data-adaptive Penalized Shrinkage Estimation with Application to Causal Inference}, 
      author={Herbert P. Susmann and Yiting Li and Mara A. McAdams-DeMarco and Wenbo Wu and Iván Díaz},
      year={2025},
      eprint={2505.08065},
      archivePrefix={arXiv},
      primaryClass={stat.ME},
      url={https://arxiv.org/abs/2505.08065}, 
}

\begin{appendix}

\section{Identification proof}
The identification result can be derived by the following standard argument:
\begin{align}
    &\E\left[Y(A^*) \mid X \in \mathcal{X}', A \in \mathcal{A}'\right] \\
    &= \E\left[\sum_{a' \in \mathcal{A}} Y\left(a'\right) \1[A^* = a'] \mid X \in \mathcal{X}', A \in \mathcal{A}'\right] \\
    &= \E\left[\sum_{a' \in \mathcal{A}} \E\left[Y\left(a'\right) \1[A^* = a'] \mid X\right] \mid X \in \mathcal{X}', A \in \mathcal{A}'\right] \text{\quad (law of iterated expectations)} \\
    &= \E\left[\sum_{a' \in \mathcal{A}} \E\left[Y\left(a'\right) \mid X\right] P(A^* = a' \mid X) \mid X \in \mathcal{X}', A \in \mathcal{A}'\right] \text{\quad (Assumption~\ref{assumption:intervention})} \\
    &= \E\left[\sum_{a' \in \mathcal{A}} \E\left[Y\left(a'\right) \mid A = a', X\right] P(A^* = a' \mid X) \mid X \in \mathcal{X}', A \in \mathcal{A}'\right] \text{\quad (Assumption~\ref{assumption:no-unmeasured-confounding})} \\
    &= \E\left[\sum_{a' \in \mathcal{A}} \E\left[Y \mid A = a', X\right] P(A^* = a' \mid X) \mid X \in \mathcal{X}', A \in \mathcal{A}'\right] \text{ \quad (Definition of $Y$  and Assumption~\ref{assumption:positivity}.)}
\end{align}

\section{Performance above random replacement}
\subsection{Semiparametric analysis}
\label{appendix:semi-parametrics}
Fix $a, a' \in \mathcal{A}$. For any $\P \in \mathcal{M}$, let $\theta_P(a, a') = \E[\mu(a', X) \mid A = a, X \in \mathcal{X}']$.
\begin{theorem}[von Mises expansion]
    \label{thm:vonmises}
    For any $\P \in \mathcal{M}$. For any $\P, \F \in \mathcal{M}$ and any $a, a' \in \mathcal{A}$, the parameter $\theta(a, a')$ characterized by $P \mapsto \E[\mu(a', X) \mid A = a, X \in \mathcal{X}']$ satisfies the expansion
    \begin{align}
        \theta_{\F}(a, a') - \theta_{\P}(a, a') = \E_{\P}\left\{ \D_{\F}(Z) \right\} + \R(\P, \F),
    \end{align}
    where $\D(\F) : \mathcal{Z} \to \mathbb{R}$ is characterized by
    \begin{align}
        \D(\F)(Z) = \frac{\1(X \in \mathcal{X}')}{\F(X \in \mathcal{X}')} \left[ \frac{\1(A = a')}{\F(A = a)} \frac{\pi_{\F}(a \mid X)}{\pi_{\F}(a' \mid X)} \left\{ Y - \mu_{\F}(a', X) \right\} + \frac{\1(A = a)}{\F(A = a)} \left\{ \mu_{\F}(a', X) - \theta_{\F}(a, a') \right\} \right]
    \end{align}
    and $\R : \mathcal{M} \times \mathcal{M} \to \mathbb{R}$ is characterized by
    \begin{align}
        \R(\P, \F) = \frac{\F(X \in \mathcal{X}')}{\P(X \in \mathcal{X}')} \E_{\F}\left[ \frac{\pi_{\P}(a)}{\P(a)} \left\{ \frac{\pi_{\F}(a')}{\pi_{\P}(a')} - \frac{\pi_{\F}(a)}{\pi_{\P}(a)} \right\} \left\{ \mu_{\F} - \mu_{\P} \right\} + \left\{ \frac{\F(X \in \mathcal{X}')}{\P(X \in \mathcal{X}')} \frac{\F(a)}{\P(a)} - 1\right\} \left\{\theta_{\P} - \theta_{\F}\right\} \right].
    \end{align}
\end{theorem}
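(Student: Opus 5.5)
Throughout, write $p_{\P} = \P(X \in \mathcal{X}')$, $q_{\P}(a) = \P(A = a, X \in \mathcal{X}')$ and $\1' = \1(X \in \mathcal{X}')$. The plan is a direct computation: evaluate $\E_{\P}\{\D_{\F}(Z)\}$ explicitly, subtract it from $\theta_{\F}(a,a') - \theta_{\P}(a,a')$, and rearrange what is left into the product-form remainder $\R(\P,\F)$. First I rewrite the parameter as a ratio of unconditional moments,
\begin{align}
\theta_{\P}(a,a') = \frac{\E_{\P}[\1'\,\pi_{\P}(a \mid X)\,\mu_{\P}(a',X)]}{q_{\P}(a)}.
\end{align}
Equivalently, $\E_{\P}[\1'\,\1(A=a)\{\mu_{\P}(a',X) - \theta_{\P}\}] = 0$. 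This identity is what makes the centering terms cancel later.

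Next I compute $\E_{\P}\{\D_{\F}(Z)\}$ term by term, using iterated expectations given $X$. For the weighted residual, conditioning on $X$ and $A=a'$ turns $Y$ into $\mu_{\P}(a',X)$. It also turns $\1(A=a')$ into $\pi_{\P}(a'\mid X)$. The residual term therefore becomes
\begin{align}
\frac{1}{p_{\F}\,\F(a)}\,\E_{\P}\Bigl[\1'\,\pi_{\F}(a\mid X)\,\frac{\pi_{\P}(a'\mid X)}{\pi_{\F}(a'\mid X)}\{\mu_{\P}(a',X)-\mu_{\F}(a',X)\}\Bigr].
\end{align}
For the centering term, $\1(A=a)$ becomes $\pi_{\P}(a\mid X)$, which gives
\begin{align}
\frac{1}{p_{\F}\,\F(a)}\,\E_{\P}\bigl[\1'\,\pi_{\P}(a\mid X)\{\mu_{\F}(a',X)-\theta_{\F}\}\bigr].
\end{align}
Using the centering identity at $\P$, I replace $\mu_{\F}(a',X) - \theta_{\F}$ by
\begin{align}
\{\mu_{\F}(a',X) - \mu_{\P}(a',X)\} + \{\mu_{\P}(a',X) - \theta_{\P}\} + \{\theta_{\P} - \theta_{\F}\}.
\end{align}
The middle piece integrates to zero, and the last piece contributes a constant times $\theta_{\P} - \theta_{\F}$.

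Adding the two terms, the pieces containing $\mu_{\F} - \mu_{\P}$ combine into
\begin{align}
\frac{1}{p_{\F}\F(a)}\,\E_{\P}\Bigl[\1'\,\pi_{\P}(a'\mid X)\Bigl\{\frac{\pi_{\P}(a\mid X)}{\pi_{\P}(a'\mid X)}-\frac{\pi_{\F}(a\mid X)}{\pi_{\F}(a'\mid X)}\Bigr\}\{\mu_{\F}-\mu_{\P}\}\Bigr].
\end{align}
This is a product of a propensity-ratio error and an outcome-regression error, which is the doubly robust structure. There is also a first-order piece,
\begin{align}
\frac{q_{\P}(a)}{p_{\F}\F(a)}(\theta_{\P}-\theta_{\F}).
\end{align}
Subtracting $\E_{\P}\D_{\F}$ from $\theta_{\F} - \theta_{\P}$, this first-order piece combines with $\theta_{\F} - \theta_{\P}$. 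What remains is a coefficient of the form $\{\text{ratio of normalizing constants} - 1\}$ multiplying $\theta_{\P} - \theta_{\F}$. This is a product of two differences, so it is second order.

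The main obstacle is bookkeeping the normalizing constants. The display for $\D_{\F}$ uses $\F(a)$ and $\F(X\in\mathcal{X}')$ separately. The parameter itself, however, is naturally normalized by $\F(A=a, X \in \mathcal{X}')$, or by $\F(A=a\mid X\in\mathcal{X}')$. I will fix one convention and interpret $\F(a)$ accordingly, most likely as $\F(A=a\mid X\in\mathcal{X}')$ so that $p_{\F}\F(a)=q_{\F}(a)$. I will then check that the constant multiplying $\theta_{\P}-\theta_{\F}$ is exactly the displayed factor
\begin{align}
\frac{\F(X\in\mathcal{X}')}{\P(X\in\mathcal{X}')}\frac{\F(a)}{\P(a)} - 1,
\end{align}
after pulling out the common prefactor $\F(X\in\mathcal{X}')/\P(X\in\mathcal{X}')$.

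I expect the stated form of $\R$ to be correct only up to such a rearrangement, with the expectation written under $\F$ or $\P$ after a change of measure. I will read the shorthand $\pi(a)$, $\mu$, $\theta$ inside $\R$ accordingly. Whichever version results, the qualitative point I will verify is that $\R$ is a sum of products of errors in $(\pi,\mu)$ and in $(\F(a),\theta)$.
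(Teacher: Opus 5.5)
Your route is the paper's. Its auxiliary $\D_{\P}(\F)$ (the gradient at $\F$ with $\theta_{\F}$ replaced by $\theta_{\P}$) is exactly your three-way split of $\mu_{\F}-\theta_{\F}$. $\E_{\P}\{\D_{\P}(\F)(Z)\}$ is your product term, obtained through the same centering identity, and $\E_{\P}\{\D(\F)(Z)-\D_{\P}(\F)(Z)\}$ is your piece $\frac{q_{\P}(a)}{p_{\F}\F(a)}(\theta_{\P}-\theta_{\F})$. You compute both correctly.

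The step that fails is the sign. Following the displayed statement, you \emph{subtract} $\E_{\P}\{\D_{\F}(Z)\}$ from $\theta_{\F}-\theta_{\P}$. With your normalization $p_{\F}\F(a)=q_{\F}(a)$, the $\theta$-part of what remains is then
\begin{align}
\theta_{\F}-\theta_{\P}-\frac{q_{\P}(a)}{q_{\F}(a)}(\theta_{\P}-\theta_{\F})=\Bigl\{1+\frac{q_{\P}(a)}{q_{\F}(a)}\Bigr\}(\theta_{\F}-\theta_{\P}),
\end{align}
which is first order. The ``ratio minus one'' cancellation you assert happens only if you \emph{add}. That is how the paper's proof actually starts, with $\R(\P,\F)=\theta_{\F}-\theta_{\P}+\E_{\P}\{\D(\F)(Z)\}$. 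So the expansion is $\theta_{\F}-\theta_{\P}=-\E_{\P}\{\D_{\F}(Z)\}+\R(\P,\F)$, and the ``$+$'' in the displayed statement is a typo. A quick check exposes it. Take $\mathcal{X}'=\mathcal{X}$ and let $\F,\P$ share the law of $(X,A)$ but differ in $\mu$. Then $\E_{\P}\{\D_{\F}(Z)\}=\theta_{\P}-\theta_{\F}$ exactly, while the displayed $\R$ vanishes, so the displayed identity would force $\theta_{\F}=\theta_{\P}$.

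With the sign fixed, your computation gives
\begin{align}
\R(\P,\F)={}&\frac{1}{q_{\F}(a)}\E_{\P}\Bigl[\1(X\in\mathcal{X}')\,\pi_{\P}(a'\mid X)\Bigl\{\frac{\pi_{\P}(a\mid X)}{\pi_{\P}(a'\mid X)}-\frac{\pi_{\F}(a\mid X)}{\pi_{\F}(a'\mid X)}\Bigr\}\{\mu_{\F}(a',X)-\mu_{\P}(a',X)\}\Bigr]\\
&+\Bigl\{\frac{q_{\P}(a)}{q_{\F}(a)}-1\Bigr\}(\theta_{\P}-\theta_{\F}).
\end{align}
This is what the paper's computation yields (expectation under $\P$, $\F$-quantities in the denominators), up to one slip in its display for $\E_{\P}\{\D(\F)(Z)-\D_{\P}(\F)(Z)\}$: that difference is proportional to $\theta_{\P}-\theta_{\F}$, not $\theta_{\F}-\theta_{\P}$.

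Your final step, checking that the $\theta$-coefficient is exactly the displayed factor, will therefore not go through, and no change of measure rescues it. The displayed $\R$ has the roles of $\P$ and $\F$ interchanged. Already for $\mathcal{X}'=\mathcal{X}$ and $\mu_{\F}=\mu_{\P}$, the two expressions are $\{\P(a)/\F(a)-1\}(\theta_{\P}-\theta_{\F})$ and $\{\F(a)/\P(a)-1\}(\theta_{\P}-\theta_{\F})$. These differ whenever $\F(a)\neq\P(a)$ and $\theta_{\F}\neq\theta_{\P}$. State and prove this corrected remainder rather than promising agreement ``up to rearrangement.''

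Your normalization remark, on the other hand, is right and necessary. The paper tacitly uses $\E_{\P}[\1(X\in\mathcal{X}')\1(A=a)]=\P(X\in\mathcal{X}')\P(a)$. If $\F(a)$ is read literally as $\F(A=a)$ with $\mathcal{X}'\neq\mathcal{X}$, the $\theta$-coefficient $q_{\P}(a)/\{p_{\F}\F(a)\}-1$ does not vanish at $\F=\P$, and the remainder would not be second order.
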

\begin{proof}
    For notational succinctness, for any $\P \in \mathcal{M}$ we write $\theta_{\P} \equiv \theta_{\P}(a, a')$, $\pi_{\P}(a) \equiv \pi_{\P}(a \mid X)$, $\mu_{\P} \equiv \mu_{\P}(a', X)$, $w_{\P} \equiv w_{\P}(X)$, and $\P(a) \equiv \P(A = a)$ The second-order term $\R(\P, \F)$ is given by
    \begin{align}
        \R(\P, \F) = \theta_{\F} - \theta_{\P} + \textcolor{purple}{\E_{\P}\left\{ \D(\F)(Z) \right\}}.
    \end{align}
    The key step is to write
    \begin{align}
        \label{eq:second-order-remainder-long}
        \R(\P, \F) = \theta_{\F} - \theta_{\P} + \textcolor{purple}{\E_{\P}\left\{ \D(\F)(Z) \right\}}  - \textcolor{teal}{\E_{\P}\left\{ \D_{\P}(\F)(Z) \right\}} + \textcolor{teal}{\E_{\P}\left\{ \D_{\P}(\F)(Z) \right\}},
    \end{align}
   where $\D_{\P}(\F)$ is the EIF at $\F$ with the $\theta_{\F}$ replaced with $\theta_{\P}$:
    \begin{align}
        \D_{\P}(\F)(Z) &= \frac{\1(X \in \mathcal{X}')}{\F(X \in \mathcal{X}')} \left[ \frac{\1(A = a')}{\F(a)} \frac{\pi_{\F}(a)}{\pi_{\F}(a')} \left\{ Y - \mu_{\F} \right\} + \frac{\1(A = a)}{\F(a)} \left\{ \mu_{\F} - \theta_{\P} \right\} \right].
    \end{align}
    Then its expectation with respect to $\P$ simplifies conveniently:
    \begin{align}
        \textcolor{teal}{\E_{\P}\left\{ \D_{\P}(\F)(Z) \right\}} &=  \frac{\P(X \in \mathcal{X}')}{\F(X \in \mathcal{X}')} \E_{\P}\left[ \frac{ \pi_{\F}(a)}{\F(a)} \frac{\pi_{\P}(a')}{\pi_{\F}(a')} \left\{ \mu_{\P} - \mu_{\F} \right\} + \frac{ \pi_{\P}(a)}{\F(a)} \left\{ \mu_{\F} - \theta_{\P} \right\} \right] \\
        &= \frac{\P(X \in \mathcal{X}')}{\F(X \in \mathcal{X}')} \E_{\P}\left[ \frac{ \pi_{\F}(a)}{\F(a)} \left\{ \frac{\pi_{\P}(a')}{\pi_{\F}(a')} - \frac{\pi_{\P}(a)}{\pi_{\F}(a)} \right\} \left\{ \mu_{\P} - \mu_{\F} \right\} + \frac{ \pi_{\P}(a)}{\F(a)}\left\{ \mu_{\P} - \mu_{\F} \right\} + \frac{ \pi_{\P}(a)}{\F(a)} \left\{ \mu_{\F} - \theta_{\P} \right\} \right] \\
        &= \frac{\P(X \in \mathcal{X}')}{\F(X \in \mathcal{X}')} \E_{\P}\left[ \frac{ \pi_{\F}(a)}{\F(a)} \left\{ \frac{\pi_{\P}(a')}{\pi_{\F}(a')} - \frac{\pi_{\P}(a)}{\pi_{\F}(a)} \right\} \left\{ \mu_{\P} - \mu_{\F} \right\} \right]. \label{eq:second-order-remainder-main}
    \end{align}
    Moreover, the difference $\textcolor{purple}{\E_{\P}\left\{ \D(\F)(Z) \right\}}  - \textcolor{teal}{\E_{\P}\left\{ \D_{\P}(\F)(Z) \right\}}$ simplifies to
    \begin{align}
        \textcolor{purple}{\E_{\P}\left\{ \D(\F)(Z) \right\}}  - \textcolor{teal}{\E_{\P}\left\{ \D_{\P}(\F)(Z) \right\}} &= \E_{\P}\left[ \frac{\1(X \in \mathcal{X}')}{\F(X \in \mathcal{X}')}  \frac{\1(A = a)}{\F(a)}\left\{ \theta_{\F} - \theta_{\P} \right\} \right] \\
        &= \E_{\P}\left[ \left\{ \frac{\P(X \in \mathcal{X}')}{\F(X \in \mathcal{X}')} \frac{\P(a)}{\F(a)} - 1\right\} \left\{\theta_{\F} - \theta_{\P}\right\} \right] + \theta_{\P} - \theta_{\F}. \label{eq:second-order-remainder-leftover}
    \end{align}
    Plugging \eqref{eq:second-order-remainder-main} and \eqref{eq:second-order-remainder-leftover} into \eqref{eq:second-order-remainder-long} yields the stated result. 
\end{proof}
To add some intuition as to how we arrived at the expansion in Theorem~\ref{thm:vonmises}, note that when $\mathcal{A} = \{0, 1\}$ and $\mathcal{X}' = \mathcal{X}$, the parameter $\theta_{\P}(1, 0) = \E[\mu_{\P}(0, X) \mid A = 1]$ is the standard counterfactual mean involved in the Average Treatment Effect on the Treated (ATT) parameter. The EIF for this counterfactual mean is well-known, and was given in \citealt{hubbard2011} as
\begin{align}
    Z \mapsto \frac{\1(A = 0)}{\P(A = 1)} \frac{\pi_{\P}(1, X)}{\pi_{\P}(0, X)} \left\{ Y - \mu_{\P}(A, X) \right\} + \frac{\1(A = 1)}{\P(A = 1)} \left\{ \mu_{\P}(0, X) - \theta_{\P}(1, 0) \right\}.
\end{align}
We reasoned by analogy to posit a candidate EIF for the parameter $\theta(a, a')$ in the general case, and verified it was indeed the correct EIF by showing the parameter and putative EIF yields a von Mises expansion with second-order remainder term $\R$. This result is formalized in Theorem~\ref{thm:vonmises}. 

\subsection{Targeted estimator}
\label{appendix:tmle}
The proposed TMLE algorithm for estimating $\psi_{a}^{\mathsf{rand}}$ is given below. Let $j \in \mathcal{J}$ index cross-fitting folds. For each fold, let $\mathcal{T}_j \subset \{1, \dots, n \}$ index the observations in the training set, and $\mathcal{V}_j \subset \{1, \dots, n \}$ index the observations in the validation set. For any $i \in \{1, \dots, n \}$, let $v[i]$ be the fold for which observation $i$ was in the validation set.

While the full procedure may appear complex, note that in the absence of cross-fitting the procedure amounts to a series of update steps on the estimated outcome regression $\mu$, all of which can be performed via standard logistic regression software. In this respect, our procedure is very similar to TMLE for parameters like the Average Treatment Effect (ATE). Due to the similarities between the replacement parameter and the traditional Average Treatment Effect on the Treated (ATT) parameter, we note that a TMLE algorithm for the ATT was proposed in \citealt{hubbard2011}. However our TMLE, which fluctuates only $\mu$, is simpler than theirs, which requires fluctuating both $\mu$ and $\pi$. 

\medskip

\textbf{Algorithm:} TMLE for random replacement parameter $\psi_a^{\mathsf{rand}}$.
\begin{enumerate}
    \item \textbf{Initialization.} For each fold $j \in \mathcal{J}$:
    \begin{enumerate}
        \item Construct estimators $\hat{\mu}_j$, $\hat{\pi}$, and $\hat{\P}_j$ using training observations $\mathcal{T}_j$. 
    \end{enumerate}
    \item For each focal player $a \in \mathcal{A}$:
    \begin{enumerate}
        \item \textbf{Targeting step.} For each $a' \in \mathcal{A}$ and cross-fitting fold $j \in \mathcal{J}$:
        \begin{enumerate}
            \item Construct a parametric fluctuation submodel parameterized by $\epsilon \in \mathbb{R}$: 
            \begin{align}
                \hat{\mu}_{j, \epsilon}(A, X) = \mathrm{logit}^{-1}\left\{ \mathrm{logit}(\hat{\mu}_j(A, X)) + \epsilon \times \hat{C}_j(A, X) \right\},
            \end{align}
            where the so-called ``clever covariate'' $\hat{C}_j$ is given by
            \begin{align}
                \hat{C}_j(A, X) = \frac{\1(X \in \mathcal{X}')}{\hat{\P}_j(X \in \mathcal{X}')} \frac{\1(A = a')}{\hat{\P}_j(a)} \frac{\hat{\pi}_j(a \mid X)}{\hat{\pi}_j(a' \mid X)}.
            \end{align}
            Note the construction of $\hat{C}$ implies that the estimates $\hat{\mu}(A, X)$ are only fluctuated when $A = a'$. 
            \item Estimate $\hat{\epsilon}$ by minimizing the empirical logistic log loss in the training dataset:
            \begin{align}
                \hat{\epsilon} = \argmin_{\epsilon \in \mathbb{R}} \left\{ -\sum_{i \in \mathcal{T}_j} \left( Y_i \log \hat{\mu}_\epsilon(A_i, X_i) + (1 - Y_i) \log\left( \hat{\mu}_\epsilon(A_i, X_i) \right) \right) \right\}.
            \end{align}
            \item Form fluctuated estimates in the validation data: for $i \in \mathcal{V}_j$, set
            \begin{align}
                \hat{\mu}^*(a', X_i) = \mathrm{logit}^{-1}\left\{ \mathrm{logit}(\hat{\mu}_{v[i]}(a', X_i)) + \hat{\epsilon} \times  \frac{\1(X \in \mathcal{X}')}{\hat{\P}_{v[i]}(X \in \mathcal{X}')}  \frac{1}{\hat{\P}_{v[i]}(a)} \frac{\hat{\pi}_{v[i]}(a \mid X_i)}{\hat{\pi}_{v[i]}(a' \mid X_i)} \right\}.
            \end{align}
        \end{enumerate}
        Practically speaking, the targeting step can be performed by estimating a logistic regression model with offset $\hat{\mu}_\epsilon(A, X)$ and single covariate $\hat{C}(A, X)$, and then making predictions from the model with the covariate set to $\hat{C}(a', X)$. 
        \item \textbf{Substitution step.} 
        Let $\mathcal{C} = \{ i : A_i = a, X_i \in \mathcal{X}' \}$. The targeted estimator is given by
        \begin{align}
            \hat{\psi}_a^{\mathsf{rand}} = \frac{1}{|\mathcal{C}|} \sum_{i \in \mathcal{C}} \frac{1}{m} \sum_{a' = 1}^m \hat{\mu}^*(a', X_i),
        \end{align}
        Note that, by construction, all $\hat{\mu}^*$ are formed using out-of-sample nuisance predictions from the cross-fitting procedure.
    \end{enumerate}
    \item \textbf{Output.} Return $(\hat{\psi}_{1}^{\mathsf{rand}}, \hat{\psi}_{m}^{\mathsf{rand}})^\top$, the vector of targeted estimates.
\end{enumerate}

\paragraph{Proof of Theorem~\ref{thm:consistency}}. 
We give a sketch of the proof for that the consistency and asymptotic normality cross-fitted TMLE is well-established \citep{zheng2011cvtmle}, and similar results are available for TMLE estimators of direct and indirect standardization parameters \citep{susmann2025providerevaluation}. In the following we use the shorthand notation $\P[f] = \int f(z) d\P(z)$ to denote the expectation of a function $f$ with respect to $\P$, and $\P_n[f] = n^{-1}\sum_{i=1}^n f(Z_i)$ to denote the empirical mean of $f$. We slightly abuse notation to write $\D_{\P_n^*}$ to denote the estimate of the EIF based on the updated estimate $\P_n^*$ output from the TMLE algorithm.

First, note that empirical mean estimators of $\hat{\P}(X \in \mathcal{X}')$ and $\hat{\P}(a)$ (for all $a \in \mathcal{A}$) are consistent by the large law of numbers. Therefore, consistency and asymptotic normality depend principally on the properties of the nuisance estimators for the functionals $\mu$ and $\pi$. The core of the proof lies in the decomposition
\begin{align}
    \hat{\psi}_a^{\mathsf{rand}} - \psi_a^{\mathsf{rand}}(\P) =& -\P_n\left[ \D_{\P^*} \right] \\
    & + (\P_n - \P)\left[ \D_{\P} \right] \\
    & + (\P_n - \P)\left[ \D_{\P^*} - \D_{\P} \right] \\
    & + \R(\P, \P^*).
\end{align}
The first line is the empirical mean of the EIF evaluated on the TMLE updated distribution $\P^*$, which, by construction of the TMLE updates, is $o_{\P}(n^{-1/2})$. The third line, referred to as the \textit{empirical process term}, can be controlled either through complexity restrictions (e.g. Donsker class conditions) on the nuisance estimates, or through the use of cross-fitting. For the fourth line, the second-order remainder term, recall that by assumption $\| \hat{\mu} - \mu \| = o_{\P}(n^{-1/4})$ and $\| \hat{\pi} - \pi \| = o_{\P}(n^{-1/4})$, implying that$\| \hat{\mu} - \mu \| \| \hat{\pi} - \pi \| = o_{\P}(n^{-1/2})$. Using this with the Cauchy-Schwartz inequality implies the first term in $\R$ is $o_{\P}(n^{-1/2})$; the second term of $\R$ is $o_{\P}(n^{-1/2})$ by the law of large numbers. 
\end{appendix}

\section{Additional results}
\renewcommand\thefigure{\thesection\arabic{figure}}    
\setcounter{figure}{0}    
\begin{figure}[ht]
    \centering
    \includegraphics[width=0.75\linewidth]{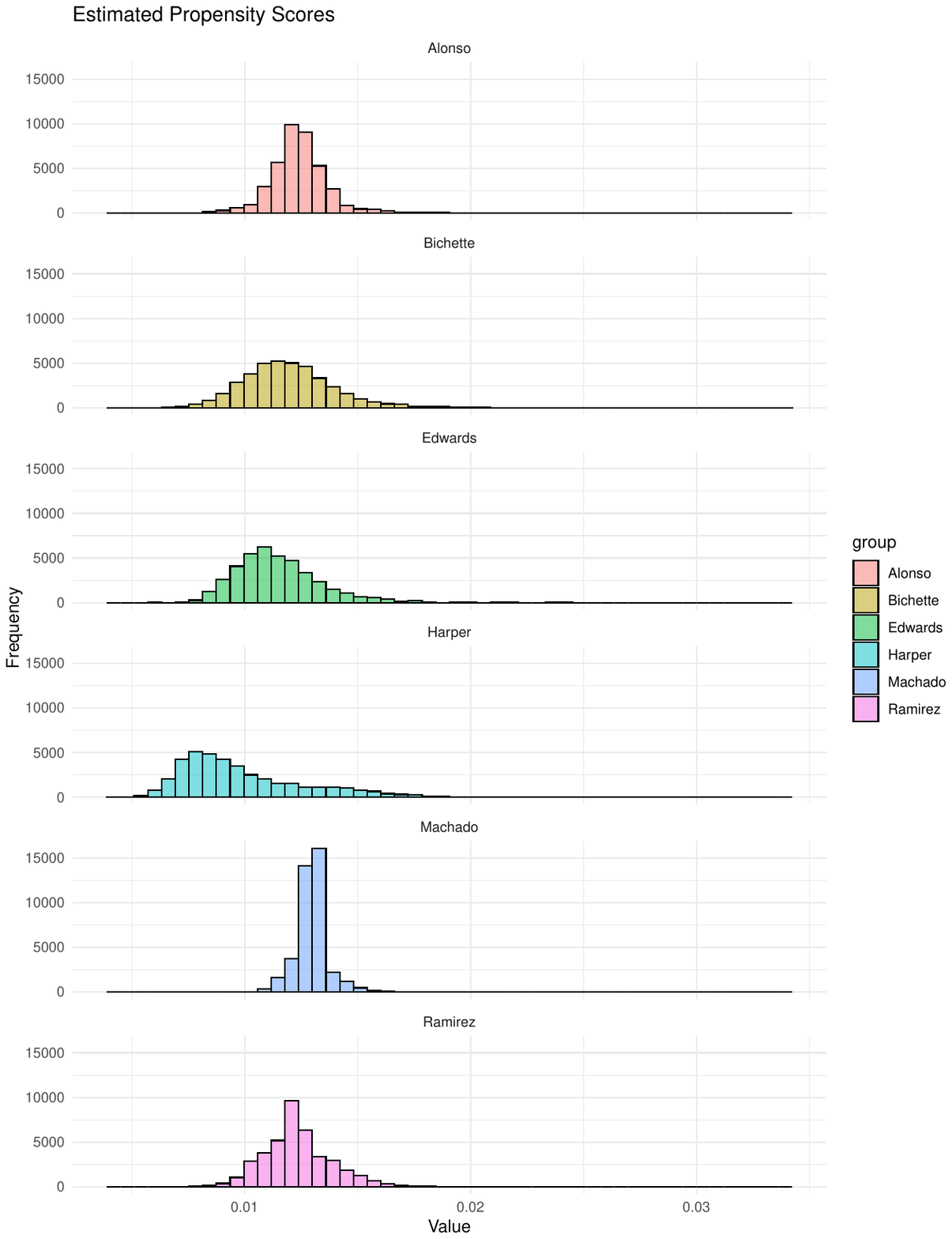}
    \caption{\small Comparison of the distribution of estimated propensity scores for the indirect model for Bryce Harper, Pete Alonso, Jose Ramirez, Bo Bichette, Xavier Edwards and Manny Machado.}
    \label{fig:mlb-harper}
\end{figure}

\end{document}